\documentclass[letter,UKenglish,autoref,11pt]{article}

\usepackage{microtype}

\usepackage{soul}
\usepackage{lipsum,graphicx}
\usepackage{caption}
\usepackage[dvipsnames,svgnames,x11names,table]{xcolor}

\usepackage{array}
\usepackage{amsmath}
\usepackage{amssymb}
\usepackage{amsthm}
\usepackage{fullpage}
\usepackage{complexity}

\usepackage{todonotes}

\usepackage{libertine}
\usepackage{libertinust1math}

\usepackage[T1]{fontenc}

\usepackage[most]{tcolorbox}
\usepackage{fontawesome}
\usepackage{inconsolata}
\definecolor{bloodRed}{RGB}{92,3,37}
\definecolor{slightPurple}{RGB}{166,6,66}
\definecolor{slighterPurple}{RGB}{184,43,87}
\definecolor{elegantblue}{RGB}{100,140,180}
\definecolor{elegantcoral}{RGB}{220,140,120}
\definecolor{elegantsage}{RGB}{140,170,140}
\definecolor{elegantlavender}{RGB}{180,150,180}
\definecolor{elegantgold}{RGB}{220,180,100}
\definecolor{elegantgray}{RGB}{120,120,120}

\usepackage[colorlinks,backref,pagebackref]{hyperref}
\hypersetup{
  linkcolor=slightPurple, 
  citecolor=slighterPurple, 
  urlcolor= bloodRed 
}

\allowdisplaybreaks
\usepackage{multicol}
\usepackage{subcaption}

\usepackage[ruled,vlined, linesnumbered]{algorithm2e}
\usepackage[capitalize, nameinlink, noabbrev]{cleveref}

\usepackage{tikz}

\usetikzlibrary{chains,fit}
\usetikzlibrary{shapes.geometric}
\usetikzlibrary{matrix,positioning,calc}
\usetikzlibrary{decorations.markings}
\usetikzlibrary{decorations.pathmorphing}
\usetikzlibrary{decorations.pathreplacing}

\tikzstyle{vertex}=[circle, draw, thick, inner sep=2pt, minimum width=4pt]

\tikzstyle{vertbox}=[draw, inner sep=0pt, minimum size=8pt]

\usepackage[dvipsnames]{xcolor}
\usepackage{subfiles}
\usepackage[normalem]{ulem}

\usepackage{thmtools}
\usepackage{thm-restate}

\crefname{algocf}{alg.}{algs.}
\Crefname{algocf}{Algorithm}{Algorithms}

\newtheorem{theorem}{Theorem}

\newtheorem{observation}[theorem]{Observation}

\newtheorem{definition}[theorem]{Definition}
\newtheorem{claim}[theorem]{Claim}

\newcommand{\PSP}{\mathrm{MSPC}}

\title{Shortest Paths with Linear Edge Weights}

\author{\vspace{0.1cm}\\
    Suryajith Chillara\\
    CSTAR, IIIT Hyderabad, India\\
    \texttt{suryajith.chillara@iiit.ac.in}
    \and
    \vspace{0.1cm}\\
    Kshitij Gajjar\\
    CSTAR, IIIT Hyderabad, India\\
    \texttt{kshitij@iiit.ac.in}
    \and
    Nithish Raja\thanks{Part of this work was done while the author was a Master's student at CSTAR, IIIT Hyderabad, India.}\\
    TU Eindhoven, Netherlands\\
    \texttt{n.r.raja@tue.nl}
    \vspace{0.5cm}
}

\begin{document}

\maketitle

\begin{abstract}
    
    We study shortest paths in directed graphs whose edge weights are of the form $$ \mathsf{wt}(e) = a_{e,1} \lambda_1 + a_{e,2} \lambda_2 + a_{e,3} \lambda_3 + \cdots + a_{e,d} \lambda_d + a_{e,d+1}.$$
    Here, each $a_{e,i}\in\mathbb{R}$ is a fixed constant for each edge $e$, whereas each $\lambda_i\in\mathbb{R}$ is common across the entire graph. So, there could be different shortest paths in the graph for different values of the $\lambda_i$'s. The number of such shortest paths is of interest in several combinatorial optimization problems. This is called the \emph{Parametric Shortest Paths} problem, and has been studied since the 1980s.
    
    For $d=1$, Carstensen (1983) showed that the number of shortest paths in $n$-vertex graphs is at most $n^{O(\log n)}$. She also proved a matching lower bound of $n^{\Omega(\log n)}$, which was later refined by Mulmuley \& Shah (2001). For $d=2$, Gajjar \& Radhakrishnan (2019) showed an upper bound of $n^{O(\log^2 n)}$. Barth, Funke \& Proissl (2022) generalized their result to prove an upper bound of $n^{O_d(\log^d n)}$ for all positive integers $d$. The lower bound did not undergo any improvement over the years.

    In this paper, we close this long line of research by showing an $n^{O(d\log n)}$ upper bound for all positive integers $d$, exponentially improving the previous upper bound. We observe that a matching lower bound of $n^{\Omega(d\log n)}$ can be obtained by trivially extending existing lower bound constructions for $d=1$. We also show that our proof can be adapted to work for undirected graphs with positive edge weights. Furthermore, for directed graphs whose edge weights are univariate polynomials of degree at most $q$, we prove an upper bound of $n^{O(\log{n}+\log{q})}$.
    
    Finally, building upon work on the \emph{Point Location} problem by Ezra, Har-Peled, Kaplan \& Sharir (2020), we construct a \emph{Shortest Path Oracle} which takes as input a point $\overline{x}\in \mathbb{R}^d$, and outputs a shortest path at $\overline{\lambda}=\overline{x}$ in sublinear time (for a wide regime of $d$).
    
    All earlier upper bound proofs proceeded by arranging the vertices of the graph in layers, splitting the graph across its middle layer into two ``halves'', and then recursing on each half-graph. We deviate from this proof methodology by ``halving'' the graph in a different way: we eliminate all the odd-numbered layers and retain only the even-numbered layers, whilst maintaining requisite shortest paths of the original graph. We then view shortest paths in the half-graph as convex objects in $d$-dimensional space, which leads us to the required recurrence.
\end{abstract}


\tableofcontents

\newpage

\section{Introduction}\label{sec:introduction}

Computing shortest paths in graphs is a fundamental problem in computer science, dating back to over seven decades. Some of the most popular algorithms to efficiently find a shortest path in a given network, namely Dijkstra's algorithm~\cite{Dijkstra1959ANO}, the Bellman-Ford algorithm~\cite{bellman1958routing,ford1956network,moore1959shortest}, the Floyd-Warshall algorithm~\cite{floyd1962algorithm,warshall1962theorem,roy1959transitivite}, and Johnson's algorithm~\cite{johnson1977efficient} are all now taught in basic Algorithms courses worldwide.

In fact, the most popular algorithm used in practice is the $A^*$ algorithm~\cite{hart1968astar} to compute shortest paths (which is basically Dijkstra's algorithm with a heuristic), even more popular than the famed Fast-Fourier Transform~\cite{CooleyTukey1965, Heidman_Johnson_Burrus_1985} that performs multiplications!

Most of the research on shortest paths is focused on graphs whose edge weights are fixed, including all the algorithms mentioned above. In most practical scenarios, however, the edge weights are changing in real time. Some common examples are the traffic on a street in a road network, the network traffic on a LAN wire connecting two nodes in a computer network, the strength of the signal being passed across two cellular masts (cellphone towers) in a telecommunication network. 
Such situations necessitate the study of graphs in which the edge weights are varying with time. There are mainly four lines of work in this regard:

\begin{enumerate}
    \item[(i)] Edge weights are varying  probabilistically (generally with the graph having a probability distribution, and each edge having a mean and a variance~\cite{NKBM_2006,frieze1985shortest}).
    
    \item[(ii)] Weighted edges are coming in an online fashion (as in temporal or dynamic graphs
    
    \item[(iii)] Edge weights are given by time-varying functions (each edge is assigned a function of time, mapping it to a real number denoting the weight of the edge~\cite{Carstensen_1983,erickson2010maximum}).
    
    \item[(iv)] Edge weights are given by time-varying functions (but path weights are functional compositions of their constituent edge weight functions~\cite{FoschiniHS2014,Gajjar_Varma_Chattarjee_Radhakrishnan_21}).
\end{enumerate}

All four of these aspects have been studied in the past. 
However, throughout this paper, we will be focusing only on the third aspect (item (iii) above), known as parametric shortest paths.

\subsection{Parametric Shortest Paths}

Research on parametric shortest paths began in the 1980s, when Carstensen~\cite{Carstensen_1983} studied the minimum size of a shortest path cover (MSPC) of graphs whose edge weights are linear functions of a parameter $\overline{\lambda}$.

\begin{definition}[Minimum Shortest Path Cover (MSPC)]\label{def:mspc}
Let $G$ be a directed graph (with two special vertices $s$ and $t$), whose edge weights are linear functions of $d$ variables $\lambda_1,\lambda_2,\ldots,\lambda_d$; that is, $$\mathsf{wt}(e)=a_{e,1} \lambda_1 + a_{e,2} \lambda_2 + \cdots + a_{e,d} \lambda_d + a_{e,d+1}$$ for each edge $e\in E(G)$, and $a_{e,i}\in\mathbb{R}$ for each $i\in\{1,2,\ldots,d+1\}$. (See \Cref{fig:planar3x3} for an example of such a graph.) The weight of a path $P$ from $s$ to $t$ is defined in the usual way as the sum of the weights of its constituent edges: $$\mathsf{wt}(P)=\sum_{e\in P}\mathsf{wt}(e)=\sum_{e\in P} \left(a_{e,1} \lambda_1 + a_{e,2} \lambda_2 + \cdots + a_{e,d} \lambda_d + a_{e,d+1}\right).$$ Let $\mathcal{P}_{s,t}$ be the set of paths from $s$ to $t$ in $G$. Then, $\mathcal{S}$ is a Minimum Shortest Path Cover ($\PSP$) of $G$ if $\mathcal{S}$ is a minimum-sized subset of $\mathcal{P}_{s,t}$ such that for every $\overline{x}\in\mathbb{R}^d$, there is a path $P\in\mathcal{S}$ which is a shortest path in $G$ at $\overline{\lambda}=\overline{x}$; that is, the path $P$ is a shortest path from $s$ to $t$ in the fixed-edge weight graph obtained upon setting $(\lambda_1,\lambda_2,\ldots,\lambda_d) = (x_1,x_2,\ldots,x_d)$ in all the edge weights of $G$. (See \Cref{fig:allpathsmcsp} for an example of an MSPC.)
\end{definition}

	\begin{figure}
		\centering
		\begin{tikzpicture}[scale=1.4]
			
			\begin{scope} [xshift=-3.25cm]
			{
                \node[vertex] (v00) at (0,0) [fill=blue,circle, draw, inner sep=0pt, minimum size=8pt, label=below left:{\Large $s$}] {};
                

                \node[vertex] (v02) at (0,2) [fill=blue,circle, draw, inner sep=0pt, minimum size=8pt] {};
                

                \node[vertex] (v04) at (0,4) [fill=blue,circle, draw, inner sep=0pt, minimum size=8pt] {};
                

                \node[vertex] (v20) at (2,0) [fill=blue,circle, draw, inner sep=0pt, minimum size=8pt] {};


                \node[vertex] (v22) at (2,2) [fill=blue,circle, draw, inner sep=0pt, minimum size=8pt] {};


                \node[vertex] (v24) at (2,4) [fill=blue,circle, draw, inner sep=0pt, minimum size=8pt] {};

				
				\node[vertex] (v40) at (4,0) [fill=blue,circle, draw, inner sep=0pt, minimum size=8pt] {};


                \node[vertex] (v42) at (4,2) [fill=blue,circle, draw, inner sep=0pt, minimum size=8pt] {};


                \node[vertex] (v44) at (4,4) [fill=blue,circle, draw, inner sep=0pt, minimum size=8pt, label=above right:{\Large $t$}] {};

			}
			\end{scope}
			
			\begin{scope} [decoration={markings, mark=at position 0.6 with {\arrow[scale=2,>=stealth,gray]{>}}}]
				\draw [postaction={decorate}] (v00)-- node[below, opacity=0.9] {\footnotesize{$\lambda_1 + 3\lambda_2 - 4\lambda_3 - 7$}} (v20);
				\draw [postaction={decorate}] (v20)-- node[below, opacity=0.9] {\footnotesize{$-15\lambda_2 + 6\lambda_3 + 2$}}(v40);
				\draw [postaction={decorate}] (v02)-- node[above, opacity=0.9] {\footnotesize{$3\lambda_1 + \lambda_2 - 4\lambda_3 - 9$}}(v22);
				\draw [postaction={decorate}] (v22)-- node[above, opacity=0.9] {\footnotesize{$\lambda_1 - 9\lambda_2 + 4\lambda_3 + 7$}}(v42);
				\draw [postaction={decorate}] (v04)-- node[above, opacity=0.9] {\footnotesize{$8\lambda_1 - 2\lambda_2 + \lambda_3 - 6$}}(v24);
				\draw [postaction={decorate}] (v24)-- node[above, opacity=0.9] {\footnotesize{$-3\lambda_1 - 8\lambda_2 + 6$}}(v44);
				
				\draw [postaction={decorate}] (v00)-- node[opacity=0.9, rotate=90,yshift=0.25cm] {\footnotesize{$8\lambda_2 - 12$}}(v02);
				\draw [postaction={decorate}] (v02)-- node[opacity=0.9, rotate=90,yshift=0.25cm] {\footnotesize{$3\lambda_2 - 5\lambda_3 + 7$}}(v04);
				\draw [postaction={decorate}] (v20)-- node[opacity=0.9, rotate=90,yshift=0.25cm] {\footnotesize{$2\lambda_1 - 6\lambda_2 + \lambda_3 - 1$}}(v22);
				\draw [postaction={decorate}] (v22)-- node[opacity=0.9, rotate=90,yshift=0.25cm] {\footnotesize{$5\lambda_1 + 8$}}(v24);
				\draw [postaction={decorate}] (v40)-- node[opacity=0.9, rotate=90,yshift=-0.25cm] {\footnotesize{$3\lambda_1 - \lambda_3 + 6$}}(v42);
				\draw [postaction={decorate}] (v42)-- node[opacity=0.9, rotate=90,yshift=-0.25cm] {\footnotesize{$-9\lambda_2 + \lambda_3 + 4$}}(v44);
			\end{scope}

            \begin{scope} [xshift=3.25cm]
			{
                \node[vertex] (v00) at (0,0) [fill=blue,circle, draw, inner sep=0pt, minimum size=8pt, label=below left:{\Large $s$}] {};
                

                \node[vertex] (v02) at (0,2) [fill=blue,circle, draw, inner sep=0pt, minimum size=8pt] {};
                

                \node[vertex] (v04) at (0,4) [fill=blue,circle, draw, inner sep=0pt, minimum size=8pt] {};
                

                \node[vertex] (v20) at (2,0) [fill=blue,circle, draw, inner sep=0pt, minimum size=8pt] {};


                \node[vertex] (v22) at (2,2) [fill=blue,circle, draw, inner sep=0pt, minimum size=8pt] {};


                \node[vertex] (v24) at (2,4) [fill=blue,circle, draw, inner sep=0pt, minimum size=8pt] {};

				
				\node[vertex] (v40) at (4,0) [fill=blue,circle, draw, inner sep=0pt, minimum size=8pt] {};


                \node[vertex] (v42) at (4,2) [fill=blue,circle, draw, inner sep=0pt, minimum size=8pt] {};


                \node[vertex] (v44) at (4,4) [fill=blue,circle, draw, inner sep=0pt, minimum size=8pt, label=above right:{\Large $t$}] {};

			}
			\end{scope}
			
			\begin{scope} [decoration={markings, mark=at position 0.6 with {\arrow[scale=2,>=stealth,gray]{>}}}]
				\draw [postaction={decorate}] (v00)-- node[below, opacity=0.9] {\footnotesize{$-31$}} (v20);
				\draw [postaction={decorate}] (v20)-- node[below, opacity=0.9] {\footnotesize{$53$}}(v40);
				\draw [postaction={decorate}] (v22)-- node[above, opacity=0.9] {\footnotesize{$43$}}(v42);
				\draw [postaction={decorate}] (v04)-- node[above, opacity=0.9] {\footnotesize{$26$}}(v24);
				
				\draw [postaction={decorate}] (v02)-- node[opacity=0.9, rotate=90,yshift=0.25cm] {\footnotesize{$-26$}}(v04);
				\draw [postaction={decorate}] (v20)-- node[opacity=0.9, rotate=90,yshift=0.25cm] {\footnotesize{$17$}}(v22);
				\draw [postaction={decorate}] (v40)-- node[opacity=0.9, rotate=90,yshift=-0.25cm] {\footnotesize{$9$}}(v42);
				\draw [postaction={decorate}] (v42)-- node[opacity=0.9, rotate=90,yshift=-0.25cm] {\footnotesize{$19$}}(v44);
			\end{scope}

            \begin{scope} [decoration={markings, mark=at position 0.5 with {\arrow[scale=2,>=stealth,gray]{>}}}]
                    \draw [postaction={decorate}] (v22)-- node[opacity=0.9, rotate=90,yshift=0.25cm,xshift=0.15cm] {\footnotesize{$23$}}(v24);
                    \draw [postaction={decorate}] (v00)-- node[opacity=0.9, rotate=90,yshift=0.25cm,xshift=0.1cm] {\footnotesize{$-20$}}(v02);
                    \draw [postaction={decorate}] (v24)-- node[above, opacity=0.9,xshift=0.05cm] {\footnotesize{$5$}}(v44);
                    \draw [postaction={decorate}] (v02)-- node[above, opacity=0.9,xshift=0.15cm] {\footnotesize{$-25$}}(v22);
            \end{scope}

            \begin{scope} [decoration={markings, mark=at position 0.5 with {\arrow[scale=2,>=stealth,red]{>}}}, ultra thick, red]
                \draw [postaction={decorate}] (v22)-- node[opacity=0.9, rotate=90] {}(v24);

                \draw [postaction={decorate}] (v00)-- node[opacity=0.9, rotate=90] {}(v02);

                \draw [postaction={decorate}] (v24)-- node[above, opacity=0.9] {}(v44);

                \draw [postaction={decorate}] (v02)-- node[above, opacity=0.9] {}(v22);
            \end{scope}

		\end{tikzpicture}
		\caption{(Left) A directed graph $G$ with two special vertices $s$ and $t$, and edge weights of the form $a_{e,1} \lambda_1 + a_{e,2} \lambda_2 + a_{e,3} \lambda_3 + a_{e,4}$ (that is, $d=3$). The MSPC of this graph is explained by \Cref{fig:allpathsmcsp}. (Right) An instantiation of $G$ with $\lambda_1=3$, $\lambda_2=-1$, $\lambda_3=6$. The shortest path from $s$ to $t$ in this instantiation of $G$ is indicated in thick red.}\label{fig:planar3x3}
	\end{figure}


\begin{figure}[ht]
    \vspace{0.5cm}
	\centering
	\begin{tikzpicture}[scale=0.95]
        
        \def \k {0.75}
        \def \gap {4.17}
        
        \foreach \i in {0,...,5}
        {
            \draw[step=\k, xshift=\i*\gap*\k cm, densely dotted] (0,0) grid (2*\k,2*\k);
            \pgfmathparse{\i+1};
            \edef\j{\pgfmathresult};
            \node at (\i*\gap*\k+\k,-\k*\k) {\large{$P_{\pgfmathprintnumber{\j}}$}};
        }
        
        \draw[ultra thick, red, xshift=0*\gap*\k cm] (0,0)--(0,\k)--(0,2*\k)--(\k,2*\k)--(2*\k,2*\k);
        \draw[ultra thick, red, xshift=1*\gap*\k cm] (0,0)--(\k,0)--(\k,\k)--(2*\k,\k)--(2*\k,2*\k);
        \draw[ultra thick, red, xshift=2*\gap*\k cm] (0,0)--(\k,0)--(2*\k,0)--(2*\k,\k)--(2*\k,2*\k);
        \draw[ultra thick, red, xshift=3*\gap*\k cm] (0,0)--(0,\k)--(\k,\k)--(2*\k,\k)--(2*\k,2*\k);
        \draw[ultra thick, red, xshift=4*\gap*\k cm] (0,0)--(0,\k)--(\k,\k)--(\k,2*\k)--(2*\k,2*\k);
        \draw[ultra thick, red, xshift=5*\gap*\k cm] (0,0)--(\k,0)--(\k,\k)--(\k,2*\k)--(2*\k,2*\k);
        
	\end{tikzpicture}
    \vspace{-0.2cm}
	\caption{The set $\mathcal{P}_{s,t} = \{P_1,P_2,P_3,P_4,P_5,P_6\}$ of all paths from $s$ to $t$ in the graph $G$ in \Cref{fig:planar3x3} (left). Note that $P_5$ is the only shortest path in $G$ at $\overline{\lambda}=(3,-1,6)$, as shown in \Cref{fig:planar3x3} (right). So $P_5\in \PSP(G)$. Similarly, $P_2$, $P_4$, $P_6$ are the only shortest paths in $G$ at $(0,0,-100)$, $(0,0,0)$, $(-200,2,0)$, respectively. So $P_2,P_4,P_6\in\PSP(G)$. Also, $P_1,P_3\notin\PSP(G)$, because $\mathsf{wt}(P_1)=\mathsf{wt}(P_5)+2$ and $\mathsf{wt}(P_3)=\mathsf{wt}(P_2)+2$. Hence, $|\PSP(G)|=4$.}\label{fig:allpathsmcsp}
    \end{figure}

\subsection{Prior Work}

Carstensen showed\footnote{Carstensen attributed the proof of this upper bound to Gusfield.} that every graph $G$ on $n$ vertices whose edge weights are of the form $\mathsf{wt}(e)=a_{e,1}\lambda_1+a_{e,2}$ (that is, $d=1$) admits an MSPC of size $n^{O(\log n)}$. She also exhibited graphs whose MSPC is of size $n^{\Omega(\log n)}$, proving that her upper bound is optimal up to the constant in the exponent.

Building upon Carstensen's work, Mulmuley \& Shah~\cite{Mulmuley_Shah_2000} proved that her $n^{\Omega(\log n)}$ lower bound can be expressed with edge weights having just poly-logarithmic bits (that is, the $a_{e,i}$'s for each edge $e$ can be represented using $\log^{O(1)}(n)$ bits). Gajjar \& Radhakrishan~\cite{Gajjar_Radhakrishnan_2019} modified their construction and edge weights slightly (still keeping them poly-logarithmic), gave a better exposition of their proof, and showed that the $n^{\Omega(\log n)}$ lower bound also holds (possibly with a different constant in the exponent) for planar graphs, thereby refuting a conjecture of Nikolova~\cite{Nikolova2009}.

Gajjar \& Radhakrishan~\cite{Gajjar_Radhakrishnan_2019}  also explored graphs with edge weights of the form $a_{e,1}\lambda_1 + a_{e,2}\lambda_2 + a_{e,3}$ (that is, $d=2$), and proved an upper bound of $n^{O(\log^2 n)}$ on the size of the MSPC in such graphs. However, their proof did not work for three parameters and beyond. Barth, Funke \& Proissl~\cite{Barth_Funke_Proissl_2022} succeeded in extending their idea to $d$ parameters (graphs with edge weights of the form $a_{e,1} \lambda_1 + a_{e,2} \lambda_2 + a_{e,3} \lambda_3 + \cdots + a_{e,d} \lambda_d + a_{e,d+1}$), showing an upper bound of $n^{O_d(\log^d n)}$ for all positive integers $d$. Chatterjee, Gajjar \& Radhakrishnan~\cite{CGRICTS} improved this\footnote{
The constant behind the big-O notation in the exponent of~\cite{Barth_Funke_Proissl_2022} is $2^d$ (an exponential dependence on $d$). This was improved to an absolute constant by~\cite{CGRICTS} (their constant is simply the number $4$, and therefore has no dependence on $d$).} to $n^{O(\log^d n)}$, which is state-of-the-art.

The lower bound, however, has remained $n^{\Omega(d \log n)}$. Though not explicitly mentioned in any of the preceding papers, this lower bound can be easily realized by considering a graph with edge weights of the form $a_{e,1}\lambda_1 + a_{e,2}$ whose MSPC is of size $n^{\Omega(\log n)}$, and then attaching $d$ copies of this graph with itself in series (for more details, see \Cref{clm:linearlowerbound}).

Note that the $d$ is in the (single) exponent in the lower bound, and in the double exponent (exponent of the exponent) in the upper bound. This leaves a massive gap between the lower and upper bounds. In this work, we fully bridge this gap.

\subsection{Our Contributions}

Our main contribution is an upper bound on the MSPC of directed acyclic graphs with linear edge weights.
\begin{restatable}{theorem}{linearupperbound}\label{thm:linear_upper_bound}
    Let $G = (V,E)$ be an $n$-vertex directed acyclic graph with edge weights of the form $a_{e,1} \lambda_1 + a_{e,2} \lambda_2 + \cdots + a_{e,d} \lambda_d + a_{e,d+1}$. Then,
    \begin{equation*}
        |\PSP(G)| \in n^{O(d\log{n})}.
    \end{equation*}
\end{restatable}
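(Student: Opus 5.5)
We reduce to a layered setting and then halve the graph by deleting alternate layers, tracking how the set of relevant path weight functions grows.

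\textbf{Reduction to a layered graph.} Every $s$--$t$ path in the DAG $G$ has at most $n-1$ edges. Replace $G$ by a layered DAG $H$ with $L+1 = O(n)$ layers, $L$ a power of two. Each layer is a copy of $V(G)$. We add edges between consecutive layers copying the edges of $G$, together with zero-weight "stay" edges $v_i\to v_{i+1}$. The $s$--$t$ paths of $H$ then correspond to those of $G$, with the same weight functions. We track the family of parametric distance functions $\mathsf{wt}(P)$ for all pairs of vertices simultaneously.

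\textbf{Halving step.} For layers $0,2,4,\dots$ and each pair $(u,w)$ with $u$ in layer $2i$ and $w$ in layer $2i+2$, contract the two-step connection into a single "super-edge". Its weight is the lower envelope $\min_{v}(\mathsf{wt}(u,v)+\mathsf{wt}(v,w))$, taken over at most $n$ linear functions. This envelope is piecewise linear. The half-graph $H'$ has $L/2$ layers, but its edges are multi-choice: for each $\overline{x}$, some middle vertex realises the minimum.

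\textbf{Tracking regions.} The quantity to bound is the number $N(L)$ of cells in a common refinement of $\mathbb{R}^d$. On each cell, the combinatorial choice of shortest path between every pair of vertices in layers $0$ and $L$ is fixed. Shortest paths are then linear on each cell, and $|\PSP(G)|\le N(L)$.

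Given the refinement for $H'$, add, inside each cell, the hyperplanes $\mathsf{wt}(u,v)+\mathsf{wt}(v,w)=\mathsf{wt}(u,v')+\mathsf{wt}(v',w)$. These are taken over all pairs $u,w$ in consecutive even layers, all middle $v,v'$, and all $i$. Inside one cell of the coarser structure the relevant functions are linear, so there are at most $L\cdot n^4$ hyperplanes. An arrangement of $m$ hyperplanes in $\mathbb{R}^d$ has $O(m)^d$ cells. Each cell therefore splits into at most $(Ln^4)^{O(d)} = n^{O(d)}$ pieces. This fixes all the middle choices, making the problem in each piece an honest linear-weight instance on $L/2$ layers.

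The obstacle is the order of operations. The cells of $H'$ depend on which super-edge pieces are active, while the super-edge pieces are determined by the middle-layer hyperplanes. The plan resolves this by doing the refinement bottom-up.

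\textbf{Bottom-up refinement.} First fix the middle choices for all odd layers simultaneously, using one global arrangement of $n^{O(1)}$ hyperplanes. That arrangement has $n^{O(d)}$ cells. On each cell, $H'$ is a genuine layered graph with linear weights and $L/2$ layers. Then recurse on $H'$ restricted to that cell. This gives
\[
N(L)\le n^{O(d)}\cdot N(L/2),
\]
which yields $N(L)\le n^{O(d)\log L} = n^{O(d\log n)}$. The base case $L=1$ gives a single-edge graph with $N=1$.

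The point to verify carefully is that restricting to a convex cell and then recursing does not require any cell count beyond the hyperplane-arrangement bound. Cells are convex and arrangements restricted to a convex set only have fewer cells, so this should go through. The genuine subtlety is making sure ties and the "stay" edges do not break linearity of the super-edge weights within a cell. Choosing a fixed tie-breaking rule handles this, and closures of cells still cover $\mathbb{R}^d$.
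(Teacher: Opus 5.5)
Your proposal is correct and follows the paper's proof: you discard alternate layers, fix all middle-vertex choices with a single global arrangement of $\mathrm{poly}(n)$ hyperplanes (the paper's $n^5$), and recurse on the half-length linear-weight graph inside each cell, which gives $(n^{5d}+1)^{O(\log n)}$. Your zero-weight ``stay'' edges for layering a general DAG and your closure-of-open-cells treatment of ties are sound, and they supply details the paper leaves implicit (the paper simply assumes a layered DAG and uses lexicographic tie-breaking).
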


We also show that our proof for \Cref{thm:linear_upper_bound} can be adapted to work for directed graphs without negative-weight cycles, and for undirected graphs without negative-weight edges.

\begin{restatable}{theorem}{directedgraphsnonegcyc}\label{thm:directedgraphsnonegcyc}
    Let $G = (V,E)$ be an $n$-vertex directed  graph without negative-weight cycles with edge weights of the form $a_{e,1} \lambda_1 + a_{e,2} \lambda_2 + \cdots + a_{e,d} \lambda_d + a_{e,d+1}$. Then,
    \begin{equation*}
        |\PSP(G)| \in n^{O(d\log{n})}.
    \end{equation*}
\end{restatable}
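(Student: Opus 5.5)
The plan is to reduce \Cref{thm:directedgraphsnonegcyc} to \Cref{thm:linear_upper_bound} with a layering construction. The standard Bellman--Ford unrolling should work. I will build a DAG $H$ whose $s$--$t$ paths correspond to walks in $G$ with at most $n-1$ edges. Take $n$ copies $V_0,\dots,V_{n-1}$ of $V$, writing $v^{(i)}$ for the copy of $v$ in layer $i$. For every edge $(u,v)\in E$ and every $0\le i<n-1$, add the edge $(u^{(i)},v^{(i+1)})$ with the same linear weight. To allow paths to finish early, also add zero-weight ``waiting'' edges $(t^{(i)},t^{(i+1)})$. Set the source to $s^{(0)}$ and the sink to $t^{(n-1)}$.

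The graph $H$ is acyclic, has $N=n^2$ vertices, and its weights are linear in the same $d$ parameters. Every $s^{(0)}$--$t^{(n-1)}$ path in $H$ projects to an $s$--$t$ walk in $G$ of the same weight, once waiting edges are deleted. Conversely, every simple $s$--$t$ path in $G$ has at most $n-1$ edges, so it lifts to a path in $H$.

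Next I fix $\overline{x}\in\mathbb{R}^d$. By hypothesis $G$ has no negative-weight cycle at $\overline{x}$, so the standard argument applies. Any walk decomposes into a simple path plus cycles, each of nonnegative weight. Hence the shortest $s$--$t$ walk weight equals the shortest simple path weight, and every shortest walk (cancelling cycles) yields a simple shortest path of no larger weight. It follows that $\mathrm{dist}_H(\overline{x})=\mathrm{dist}_G(\overline{x})$.

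Now apply \Cref{thm:linear_upper_bound} to $H$ to get a cover $\mathcal{S}_H$ of size $N^{O(d\log N)}=n^{O(d\log n)}$, since $\log N=2\log n$. For each $Q\in\mathcal{S}_H$, I map it to a simple $s$--$t$ path in $G$. Here is the point needing care, and I expect it to be the main obstacle. The projection of $Q$ is a walk, and deleting its cycles must be done in a way independent of $\overline{x}$. I will use a fixed rule: repeatedly cut out the first repeated-vertex loop. This gives a simple path $\pi(Q)$ with $\mathsf{wt}(\pi(Q))=\mathsf{wt}(Q)-\sum_C \mathsf{wt}(C)$, where the sum runs over the deleted cycles $C$.

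Now suppose $Q$ is shortest in $H$ at $\overline{x}$. Every deleted cycle has nonnegative weight at $\overline{x}$, so $\mathsf{wt}(\pi(Q))(\overline{x})\le \mathrm{dist}_H(\overline{x})=\mathrm{dist}_G(\overline{x})$. Therefore $\pi(Q)$ is a shortest path in $G$ at $\overline{x}$. So $\{\pi(Q):Q\in\mathcal{S}_H\}$ is a shortest path cover of $G$, and $|\PSP(G)|\le|\mathcal{S}_H|=n^{O(d\log n)}$.

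One subtlety concerns points where $s$ cannot reach $t$, or where the hypothesis might be read as holding only at some $\overline{x}$. I will treat the hypothesis as holding for all $\overline{x}\in\mathbb{R}^d$, as in the statement. Unreachability is the same in $G$ and $H$, so it is harmless.
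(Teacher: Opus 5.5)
Your proposal is correct and follows the paper's reduction. Both unroll $G$ into $n$ layers of copies of $V$, obtain a DAG on $n^2$ vertices with the same linear weights, use nonnegativity of cycles to relate shortest paths in the two graphs, and then invoke \Cref{thm:linear_upper_bound}. Your version is somewhat cleaner than the paper's sketch in two places. First, the zero-weight waiting edges at $t$ give a single sink, where the paper uses a set of sinks $\{(t,\ell)\}$. Second, your fixed cycle-cancellation map $\pi$ explicitly transfers a cover of $H$ back to a cover of $G$ by simple paths of no larger size. The paper only argues the lifting direction: shortest paths of $G$ remain shortest in $G'$.
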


\begin{restatable}{theorem}{undirectedgraphsnonegcedg}\label{thm:undirectedgraphsnonegedg}
    Let $G = (V,E)$ be an $n$-vertex undirected graph without negative-weight edges with edge weights of the form $a_{e,1} \lambda_1 + a_{e,2} \lambda_2 + \cdots + a_{e,d} \lambda_d + a_{e,d+1}$. Then,
    \begin{equation*}
        |\PSP(G)| \in n^{O(d\log{n})}.
    \end{equation*}
\end{restatable}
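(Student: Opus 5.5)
The plan is to reduce \Cref{thm:undirectedgraphsnonegedg} to \Cref{thm:directedgraphsnonegcyc} by the standard bidirection trick. First I fix the meaning of the statement. An undirected graph with an edge of negative weight has no well-defined shortest path, since we could traverse that edge back and forth. So the hypothesis ``without negative-weight edges'' means that we only consider points in the region
\[
D=\{\overline{x}\in\mathbb{R}^d : \mathsf{wt}(e)(\overline{x})\ge 0 \text{ for all } e\in E\}.
\]
This region is a convex polyhedron, being an intersection of halfspaces. The goal is then a set $\mathcal{S}$ of $s$--$t$ paths of size $n^{O(d\log n)}$ such that every $\overline{x}\in D$ has a shortest path in $\mathcal{S}$.

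Next I construct a directed graph $\vec G$ on the same vertex set. Each undirected edge $\{u,v\}$ with weight $\mathsf{wt}(e)$ is replaced by the two arcs $(u,v)$ and $(v,u)$, both carrying the same linear weight $\mathsf{wt}(e)$. Then $\vec G$ has $n$ vertices and linear edge weights in the same $d$ parameters. At any $\overline{x}\in D$ every arc has nonnegative weight, so every directed cycle has nonnegative weight; in particular $\vec G$ has no negative-weight cycle at $\overline{x}$. Simple directed $s$--$t$ paths of $\vec G$ are in weight-preserving bijection with simple undirected $s$--$t$ paths of $G$. Moreover, at points of $D$ a shortest directed walk can be shortcut to a simple path without increasing its weight, so the shortest path weights of $G$ and $\vec G$ coincide at every $\overline{x}\in D$. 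Hence any shortest path cover of $\vec G$ restricted to $D$ maps, via this bijection, to a shortest path cover of $G$ of the same size, and \Cref{thm:directedgraphsnonegcyc} gives the bound $n^{O(d\log n)}$.

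The main obstacle is a mismatch of hypotheses. \Cref{thm:directedgraphsnonegcyc} is phrased for graphs with no negative cycle for every $\overline{\lambda}$, whereas $\vec G$ is only guaranteed this on $D$. Outside $D$ a single negative edge already creates a negative $2$-cycle $u\to v\to u$. I would handle this by checking that the proof of \Cref{thm:directedgraphsnonegcyc} is local in $\overline{x}$. That proof reduces, via the Bellman--Ford unrolling into $n$ layers, to the DAG case (\Cref{thm:linear_upper_bound}). There, a shortest walk with at most $n-1$ edges exists at each point with no negative cycle, and the halving recursion only compares path weights as convex objects in $\mathbb{R}^d$. Intersecting every region in that argument with the convex set $D$ preserves convexity, so all counting bounds go through unchanged.

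If such a localization is inconvenient, a fallback is to apply \Cref{thm:linear_upper_bound} directly to the layered unrolling of $\vec G$ with $n$ layers. This gives an $O(n^2)$-vertex DAG, hence the bound $(n^2)^{O(d\log n^2)}=n^{O(d\log n)}$. Here I would restrict attention to $\overline{x}\in D$ and map each walk in the cover to a simple path obtained by shortcutting cycles. This shortcutting does not increase weight on $D$, since cycles there have nonnegative weight, and it does not increase the size of the cover.
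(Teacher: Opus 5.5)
Your proposal is correct and follows essentially the paper's route. The paper also treats each undirected edge as two opposite arcs, unrolls the result into an $n$-layer DAG on $n^2$ vertices, and invokes \Cref{thm:linear_upper_bound}. It likewise uses nonnegativity of the weights on the feasible region to shortcut walks in the unrolled graph to simple paths without increasing their weight.

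Your fallback, which maps each cover walk to a simple path by a fixed, $\overline{x}$-independent shortcut, transfers the cover somewhat more explicitly than the paper does. The paper's contradiction argument only shows that shortest paths of $G$ remain shortest in the DAG, and treats ties only briefly.
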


We also show an upper bound on the MSPC of directed graphs with univariate polynomial edge weights.

\begin{restatable}{theorem}{univardegd}\label{thm:univardegd}
    Let $G = (V,E)$ be an $n$-vertex directed graph with edge weights of the form $a_{e,q} \lambda^q + a_{e,q-1} \lambda^{q-1} + \cdots + a_{e,2} \lambda^2 + a_{e,1} \lambda + a_{e,0}$. Then,
    \begin{equation*}
        |\PSP(G)| \in n^{O(\log{n}+\log{q})}.
    \end{equation*}
\end{restatable}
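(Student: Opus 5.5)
We prove \Cref{thm:univardegd} with Carstensen's divide-and-conquer for $d=1$, replacing her ``two lines cross at most once'' fact by ``two degree-$q$ polynomials cross at most $q$ times''. Two reductions come first. Negative cycles must be handled: $\mathcal{P}_{s,t}$ consists of (simple) paths, and for a single $\lambda$ we take a shortest simple path. We work with simple paths throughout, or with walks of bounded length in a layered unfolding. Next, we unfold $G$ into a layered DAG with $n$ layers, where vertex $(v,i)$ means ``at $v$ after $i$ steps'' and a self-loop at $t$ has weight $0$. Every simple $s$--$t$ path then corresponds to a path of length exactly $n$ in this DAG. Conversely, a shortest DAG-walk at a given $\lambda$ can be shortcut to a simple path whenever no negative cycle exists at that $\lambda$. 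At values of $\lambda$ that do have negative cycles, we instead take the lower envelope over simple paths directly, using the recursion below restricted to vertex-disjoint concatenations.

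The core is the following quantity. For vertices $u,v$ and a length bound $2^k$, let $f_{u,v,k}(\lambda)$ be the minimum weight of a $u$--$v$ path with at most $2^k$ edges. Each such path's weight is a polynomial of degree $\le q$, so $f_{u,v,k}$ is the lower envelope of these polynomials. Let $N_k$ be the maximum number of pieces (maximal intervals of $\mathbb{R}$ on which a single path is optimal) over all $u,v$.

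Pick a midpoint vertex $w$ at step $2^{k-1}$. Then $f_{u,v,k}=\min_w (f_{u,w,k-1}+f_{w,v,k-1})$. The sum $f_{u,w,k-1}+f_{w,v,k-1}$ is piecewise polynomial with at most $2N_{k-1}$ pieces, because breakpoints add. Taking the minimum over $n$ such functions, each piecewise polynomial of degree $\le q$, we refine everything to a common partition of at most $2nN_{k-1}$ intervals. On each interval we have a lower envelope of $n$ fixed degree-$q$ polynomials. Two of them intersect at most $q$ times, so this is a Davenport--Schinzel sequence of order $q$ on $n$ symbols, with at most $\lambda_q(n)\le n\cdot 2^{O(\ldots)}$ pieces. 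It is enough to use the crude bound that all $\binom{n}{2}q$ pairwise intersections partition the line, giving at most $n^2q+1$ pieces. Hence
\[
N_k \le 2nN_{k-1}\cdot (n^2 q+1) \le (nq)^{O(1)} N_{k-1}, \qquad N_0\le 1,
\]
and with $k=\lceil\log_2 n\rceil$ this yields $N_k\le (nq)^{O(\log n)} = n^{O(\log n+\log q)}$. The size of the MSPC is at most the number of pieces of $f_{s,t,\lceil\log n\rceil}$, since one optimal path per piece covers every $\lambda$.

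The expected main obstacle is simplicity and negative cycles. The midpoint recursion concatenates two optimal halves, which may share vertices, so the resulting walk might not be a simple path and its weight need not equal the true optimum over simple paths. For graphs without negative cycles at any $\lambda$ this is harmless, because shortcutting never increases weight. For general graphs I would first try to argue that on any interval where a negative cycle exists, the structure is still governed by the walk envelope. If that fails, I would restrict the statement's intended setting to graphs with no negative cycles, consistent with \Cref{thm:directedgraphsnonegcyc}, and handle the general case separately.
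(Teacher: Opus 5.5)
Your argument is correct, but it takes a genuinely different route from the paper. It is essentially the Gusfield--Carstensen midpoint recursion that the paper deliberately sets aside.

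\textbf{Your route.} You build envelopes bottom-up, one per pair $(u,v)$. You split walks of at most $2^k$ edges at step $2^{k-1}$ and then overlay the partitions of $f_{u,w,k-1}$ and $f_{w,v,k-1}$ over all $w$. This overlay is cheap only because, in one dimension, superimposing interval partitions just adds breakpoints. You pay a factor $2n$ for the common refinement and a factor $\binom{n}{2}q+1$ for the envelope of $n$ fixed polynomials on each cell, so $N_k\le O(n^3q)\,N_{k-1}$.

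\textbf{The paper's route.} The paper works top-down on a layered DAG (\Cref{thm:unifiedbound}). It cuts the current region by at most $n^5$ fresh equations $p=p'$, coming from all pairs of vertices in alternate odd layers at once. It then deletes the even layers and recurses on $T(n^5)$ half-length graphs, each attached to its own region. So a single common partition serves all pairs at each level. The per-level factor is the complexity of an arrangement of $\mathrm{poly}(n)$ fixed surfaces, never an overlay of previously accumulated partitions.

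\textbf{What each buys.} The paper's approach is dimension-independent: the same theorem gives $n^{O(d\log n)}$ for linear weights. There your overlay step is no longer additive, and midpoint recursions only yield $n^{O(\log^d n)}$-type bounds. The paper's structure is also what the nested oracle of \Cref{thm:sporaclelin} is built on. For $q$-degree univariate weights both routes give $(nq)^{O(\log n)}=n^{O(\log n+\log q)}$. Yours is more elementary and has the smaller per-level base $O(n^3q)$ rather than $\binom{n^5}{2}q+1$. Because it works directly with bounded-length walks, it also handles graphs without negative cycles without a separate layered unfolding.

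\textbf{Two small repairs.}

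First, drop the idea of treating values of $\lambda$ with negative cycles by restricting the recursion to vertex-disjoint concatenations. Optimal substructure fails for simple paths: the best simple $u$--$v$ path through $w$ need not combine an optimal simple $u$--$w$ path with an optimal simple $w$--$v$ path. So $f_{u,v,k}=\min_w(f_{u,w,k-1}+f_{w,v,k-1})$ breaks down. Your fallback, assuming no negative cycles on $\mathbb{R}$ (or on a feasible set), is exactly the scope the paper's own proof covers. That proof goes through \Cref{thm:unifiedbound} for layered DAGs and the reduction of \Cref{subsec:reduction2DAG}.

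Second, make the shortcutting uniform over each piece, since a tie-broken optimal walk may contain zero-weight cycles. Let $W$ be the walk chosen on a piece $I$ and let $C$ be a cycle in $W$. Then $\mathsf{wt}(C)\ge 0$ on all of $I$, and deleting $C$ leaves a walk with fewer edges, which therefore stays in your family. Its weight is at most $\mathsf{wt}(W)$ and at least the envelope on $I$, so it is optimal on all of $I$. Iterating gives one simple path per piece, which is what your final count needs.
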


We also study the problem from the algorithmic standpoint. To this end, we construct a data structure that preprocesses the graph, and efficiently computes a shortest path in the graph for a given $\overline{\lambda}$ in real time.

\begin{restatable}{theorem}{sporaclelin}\label{thm:sporaclelin}
    Let $G = (V,E)$ be an $n$-vertex directed graph with edge weights of the form $a_{e,1} \lambda_1 + a_{e,2} \lambda_2 + \cdots + a_{e,d} \lambda_d + a_{e,d+1}$. Then, there exists a data structure that takes as input an $\bar{x}\in\mathbb{R}^d$, and outputs a shortest path in the graph $G$ at $\overline{\lambda}=\overline{x}$ in $O(d^4\log^2 n)$ time. The data structure takes $n^{\tilde{O}(d)}$ space and $n^{\tilde{O}(d)}$ preprocessing time.
\end{restatable}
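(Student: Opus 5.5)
The plan is to reduce the oracle to point location in a hyperplane arrangement of size $n^{O(d\log n)}$, obtained by unfolding the layer-halving recursion behind \Cref{thm:linear_upper_bound}, and then to invoke the point-location structure of Ezra, Har-Peled, Kaplan \& Sharir. (Since $G$ may have cycles, I first reduce to the acyclic setting as in \Cref{thm:directedgraphsnonegcyc}: replace $G$ by the layered DAG whose $i$-th layer is a copy of $V$, $0\le i\le n-1$, with $s$ in layer $0$ and $t$ in layer $n-1$. This has $O(n^2)$ vertices and only changes $\log n$ by a constant factor. At query points where $G$ has a negative cycle, no shortest path exists, and the oracle may report that.)

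\textbf{Step 1: a hierarchy of candidate sets.} For every pair of vertices $u,v$ in consecutive retained layers at some level of the halving recursion, I keep a set $\mathcal{C}(u,v)$ of candidate $u$--$v$ paths that contains a shortest $u$--$v$ path for every $\overline{x}$. The proof of \Cref{thm:linear_upper_bound} gives $|\mathcal{C}(u,v)|\le n^{O(d\log n)}$. At the top level, $\mathcal{C}(s,t)$ is the MSPC.

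\textbf{Step 2: the arrangement.} Consider all pairwise differences $\mathsf{wt}(P)-\mathsf{wt}(P')$ for $P,P'\in\mathcal{C}(s,t)$. Each difference is an affine function of $\overline{\lambda}$, so its zero set is a hyperplane in $\mathbb{R}^d$. The resulting family $\mathcal{H}$ has size $N=n^{O(d\log n)}$. Inside each cell of the arrangement of $\mathcal{H}$, a single fixed path of $\mathcal{C}(s,t)$ is a shortest path, and I store one such path per cell.

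\textbf{Step 3: point location.} The Ezra--Har-Peled--Kaplan--Sharir structure answers point-location queries in an arrangement of $N$ hyperplanes in $\mathbb{R}^d$ in $O(d^4\log N)$ time, using $N^{O(d\log d)}$ space and preprocessing. Substituting $\log N=O(d\log^2 n)$ would give a query time of $O(d^5\log^2 n)$. This misses the target by a factor of $d$, so the construction must be refined.

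\textbf{The main obstacle} is getting the query time down to exactly $O(d^4\log^2 n)$ while keeping space at $n^{\tilde O(d)}$. My first attempt is to avoid a single global arrangement and instead locate the query point level by level through the $O(\log n)$ levels of the halving recursion. At each level, the choice of a shortest sub-path between retained vertices is determined by comparing $n^{O(d)}$ candidate sums. I would argue this via the convexity or Minkowski-sum view: each level multiplies the complexity by only $n^{O(d)}$. That gives a local arrangement of $n^{O(d)}$ hyperplanes per level, and a query costing $O(d^4\log n^{O(d)})$ per level would be $O(d^5\log n)$, which again is too much.

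So I would instead keep the global arrangement but bound its \emph{combinatorial} size more carefully, via Zaslavsky's bound on the number of cells, rather than relying on the count of hyperplanes alone. I would then try to argue that the EHKS query cost scales as $d^3$ times the logarithm of the number of cells, or equivalently of the relevant range-space complexity, with $\log(\#\text{cells})=O(d\log^2 n)$. That combination would give $O(d^4\log^2 n)$.

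For space and preprocessing, $N^{O(d\log d)}=n^{O(d^2\log d\log n)}$ needs to be tightened to $n^{\tilde O(d)}$. I would attempt this by applying the structure recursively per level with only $n^{O(d)}$ hyperplanes each, which yields $n^{O(d^2\log d)}$ per level. Whether the $\tilde O(d)$ claim hides a factor of $d$ in the exponent is the delicate bookkeeping I would check last.
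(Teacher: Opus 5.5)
\textbf{There is a genuine gap: neither of your constructions reaches the stated bounds.} The global arrangement in Steps 2--3 is a correct construction, but with $\log N = O(d\log^2 n)$ it gives query time $O(d^5\log^2 n)$. Its space is $N^{O(d)} = n^{O(d^2\log n)}$, which is not $n^{\tilde O(d)}$. Your proposed fix assumes the Ezra--Har-Peled--Kaplan--Sharir query cost scales with $d^3$ times the logarithm of the number of \emph{cells}. The cited theorem does not say this; its bound is in terms of the number of hyperplanes, and you give no argument for the alternative. So the proposal does not establish either the query time or the space bound.

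\textbf{The missing idea.} Your ``first attempt'' of locating $\overline{x}$ level by level is exactly the paper's approach; it failed only because of a miscount. Once you have located $\overline{x}$ in a cell $S_j$ at the current level, the cell fixes the shortest $u$--$v$ sub-path for every retained pair. So every edge of the halved graph $G_j$ carries a single fixed affine weight on $S_j$. The next level's refinement of $S_j$ therefore comes only from comparing, for each pair $(u,v)$ in alternate layers of $G_j$, the at most $n$ two-edge paths $u\to w\to v$. That is at most $n^2$ hyperplanes per pair and at most $n^3$ pairs, so at most $n^5$ hyperplanes, independent of $d$. Your $n^{O(d)}$ is the number of \emph{cells} of this arrangement (by \Cref{thm:space_partitioning_theorem}), not the number of hyperplanes. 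The point-location query time depends on the hyperplane count, giving $O(d^4\log n^5)=O(d^4\log n)$ per level and $O(d^4\log^2 n)$ over the $O(\log n)$ levels.

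\textbf{Space and preprocessing.} You build one point-location structure on at most $n^5$ hyperplanes for each region at each level. There are $\sum_i (n^{5d}+1)^i = n^{O(d\log n)}$ such structures, each of size $n^{O(d)}$. The total is $n^{O(d\log n)}$, which the paper writes as $n^{\tilde O(d)}$ by absorbing the $\log n$ factor in the exponent. No extra factor of $d$ in the exponent ever arises, and preprocessing is bounded the same way. At the last level each cell corresponds to a single $s$--$t$ path, which is the output.

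Your preliminary reduction to a layered DAG with $n^2$ vertices is harmless. It only raises the per-level hyperplane count to $n^{10}$, which changes $\log$ by a constant factor.
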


\begin{restatable}{theorem}{sporacledeg}\label{thm:sporacledeg}
    Let $G = (V,E)$ be an $n$-vertex directed graph with edge weights of the form $a_{e,q} \lambda^q + a_{e,q-1} \lambda^{q-1} + \cdots + a_{e,2} \lambda^2 + a_{e,1} \lambda + a_{e,0}$. Then, there exists a data structure that takes as input an $x\in\mathbb{R}$, and outputs a shortest path in the graph $G$ at $\lambda=x$ in $O(\log^2 n + \log n \log q)$ time. The data structure takes $(nq)^{\tilde{O}(1)}$ space and $(nq)^{\tilde{O}(1)}$ preprocessing time.
\end{restatable}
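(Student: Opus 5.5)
The plan is to precompute the lower envelope of all $s$--$t$ path weight functions as a sorted list of breakpoints, and to answer a query by binary search over that list. By \Cref{thm:univardegd}, $G$ has an MSPC $\mathcal{S}$ of size $N = n^{O(\log n + \log q)}$. Each path $P\in\mathcal{S}$ has weight $\mathsf{wt}(P)(\lambda)$, a univariate polynomial of degree at most $q$. Define $f(\lambda)=\min_{P\in\mathcal{S}}\mathsf{wt}(P)(\lambda)$. Since $\mathcal{S}$ is a shortest path cover, $f$ equals the shortest $s$--$t$ distance in $G$ for every $\lambda=x$.

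\textbf{Bounding the number of envelope pieces.} The real line splits into maximal intervals, each with a single path of $\mathcal{S}$ attaining $f$ throughout. Every endpoint of such an interval is a real root of some difference $\mathsf{wt}(P)-\mathsf{wt}(P')$ with $P,P'\in\mathcal{S}$. Each difference is a nonzero polynomial of degree at most $q$, or else identically zero, in which case we discard one of the two duplicate paths. So there are at most $qN^2$ breakpoints and at most $qN^2+1$ intervals. (Davenport--Schinzel bounds would give a better count, but this one is enough.)

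\textbf{Data structure and query.} We store the breakpoints $b_1<b_2<\cdots<b_m$ in a sorted array. For each interval $[b_i,b_{i+1}]$ we store a pointer to its optimal path, which is kept explicitly as a list of at most $n$ vertices. Ties at a breakpoint are harmless: by continuity, both neighbouring paths are shortest there. A query at $x$ binary-searches the array, uses $O(\log m)=O(\log N+\log q)=O(\log^2 n+\log n\log q)$ comparisons, and returns the stored pointer. This matches the claimed query time. The space is $m\cdot n = n^{O(\log n+\log q)}\cdot q^{O(1)}$, which is $(nq)^{\tilde O(1)}$ because the $\tilde O$ absorbs the logarithmic factors in the exponent.

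\textbf{Preprocessing.} This is the main obstacle: we must actually produce $\mathcal{S}$, or at least the envelope, within $(nq)^{\tilde O(1)}$ time. I would do this by following the recursive halving argument in the proof of \Cref{thm:univardegd}. For each pair of retained vertices $(u,w)$ at each recursion level, we maintain the lower envelope of the $u$--$w$ path weights as a list of polynomial pieces with breakpoints. The envelope for $(u,w)$ in the coarser graph is $\min_v(\text{env}_{u,v}+\text{env}_{v,w})$. We compute it by merging breakpoint lists: refine the lists to a common partition, add the polynomial pieces, take the pointwise minimum over $v$, and split each cell at the real roots of the pairwise differences. In the real-RAM model we assume exact root isolation for degree-$q$ polynomials in $q^{O(1)}$ time.

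The piece-count bound at every level is inherited from the upper-bound proof, namely $n^{O(\log n+\log q)}$ per pair. Hence each of the $O(\log n)$ levels costs $n^{O(1)}\cdot(nq)^{\tilde O(1)}$, and the total preprocessing time stays within $(nq)^{\tilde O(1)}$. The delicate point is to check that the intermediate envelopes never exceed the final bound; I expect this to follow from the same counting that proves \Cref{thm:univardegd}, applied to each intermediate subgraph.
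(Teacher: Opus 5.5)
Your proposal is correct. Its query side is the same as the paper's: sorted breakpoints with a path label per interval, binary search, and $\log$ of the array size giving $O(\log^2 n+\log n\log q)$. The preprocessing, however, takes a different route.

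The paper builds the array top-down by rerunning the halving step of \Cref{thm:unifiedbound} inside each current interval. It splits the interval at the roots of the $n^5$ pairwise differences, forms the half-graph $G_j$ on each piece, and recurses on every $(S_j,G_j)$ until each segment carries one path. The array size is therefore bounded directly by the region count $\sum_i (n^{10}q)^i$, with no envelope argument.

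You instead compose per-pair lower envelopes bottom-up, $\mathrm{env}_{u,w}=\min_v(\mathrm{env}_{u,v}+\mathrm{env}_{v,w})$, which is a parametric repeated min-plus squaring. This buys two things. Each pair's envelope is computed once for the whole line rather than recomputed region by region. And the final array contains only genuine breakpoints of the $s$--$t$ distance function (at most $qN^2+1$, by your root count), rather than a refinement by many irrelevant roots.

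The step you flagged as delicate is routine. A pair at level $k$ spans a layered sub-DAG of length $2^k$ with at most $n$ vertices per layer, so \Cref{thm:unifiedbound} bounds its envelope. Even without that, the crude recurrence $P_{k+1}\le 2nP_k(qn^2+1)$ for the unmerged piece count gives $(nq)^{O(\log n)}$ after $\log n$ levels.

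Your explicit real-RAM root-isolation assumption is one the paper makes silently. Like the paper, you also tacitly treat $G$ as a layered DAG. For a general directed graph one would first apply the reduction of \Cref{subsec:reduction2DAG}, which needs the absence of negative cycles.
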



\subsection{Proof Overview}

In this section, we outline in detail the main ideas behind the proof of our main result (\Cref{thm:linear_upper_bound}). The proofs of our other results can be easily followed once this is understood.

Let $G = G_{n,n}$ be a layered DAG with $n$ layers, $n$ vertices per layer, source $s$ and sink $t$. Let the edge weights be linear functions of $d$ parameters $\lambda_1,\ldots, \lambda_d$. To bound $|\PSP(G)|$, a  standard divide-and-conquer approach does the following -- split the graph into $2n$ instances of $G_{n/2, n}$ through the vertices of the middle layer $\{v_1, \ldots, v_n\}$ and then recurse on those. In particular, each of these $G_{n/2,n}$ instances are of the form source $s$ and sink $v_i$ (call this $G_i$) for each $i$, or of the form source $v_i$ and sink $t$ (call this $G'_i$) for each $i$. 

If graphs $G$ and $G'$ are connected in series then $|\PSP(G\circ_{\text{ser}} G')|\leq |\PSP(G)|\cdot|\PSP(G')|$, and if they are connected in parallel, $|\PSP(G\circ_{\text{par}} G')|\leq |\PSP(G)|+|\PSP(G')|$.

\begin{align*}
    |\PSP(G)| \leq \sum_{i=1}^n |\PSP(G_{i})|\cdot |\PSP(G'_{i})|.
\end{align*}

Instead of dividing this at the middle layer, one could generalize the aforementioned divide-and-conquer argument through  $\ell$ such layers and then merge carefully. We note that even this carefully divided approach does not give bounds that are better than $n^{O\left(\log^d(n)/\log\log^{d-1}(n)\right)}$ (for $d\geq 2$). Though this is a better bound than that of \cite{Barth_Funke_Proissl_2022}, it is only a modest improvement. 

We first observe that when two graphs $G_1$ and $G_2$ are connected in series, the size of the shortest path cover may be much less than the product of the sizes of the shortest path covers in these individual instances. That is, a parametric shortest path $P$ in $G_1$ concatenates with a parametric shortest path $P'$ in $G_2$ if and only if there is a point in $\mathbb{R}^d$ where these are simultaneously shortest in their respective graphs. We illustrate this through the following example. 

Let $G$ be a directed acyclic graph as shown in \cref{fig:simpleAlmondGraph} whose edge weights vary as linear functions of parameters $\lambda_1$ and $\lambda_2$. Note that edge 0 between $s$ and $v$ would be the shortest edge for all values of $(\lambda_1, \lambda_2)\in\mathbb{R}^2$ such that $\lambda_1+2\lambda_2\leq 2\lambda_1 + \lambda_2$,  and edge 1 would be the shortest edge for all values of $(\lambda_1, \lambda_2)\in\mathbb{R}^2$ such that $\lambda_1+2\lambda_2 \geq 2\lambda_1 + \lambda_2$ (illustrated in \cref{fig:pspAG_s2v}). Let $\pi_{s,v}$ denote this partition of $\mathbb{R}^2$ through the line $\lambda_1 = \lambda_2$.  Similary, edge 0 between $v$ and $t$ would be the shortest edge for all values of $(\lambda_1, \lambda_2)\in\mathbb{R}^2$ such that $\lambda_1+\lambda_2\leq  2\lambda_1 + \lambda_2 - 3$,  and edge 1 would be the shortest edge for all values of $(\lambda_1, \lambda_2)\in\mathbb{R}^2$ such that $\lambda_1+ \lambda_2 \geq 2\lambda_1 + \lambda_2 - 3$ (illustrated in \cref{fig:pspAG_v2t}). Let $\pi_{v,t}$ denote this partition of $\mathbb{R}^2$ through the line $\lambda_1 = 3$. 

We represent each $s$ to $t$ path using a tuple $(a,b)$ (for $a,b\in\{0,1\}$) to indicate that the path takes edge $a$ between $s$ and $v$, and edge $b$ between $v$ and $t$. With this representation, there are four such paths $(0,0), (0,1), (1,0), (1,1)$ with parametric weights $2\lambda_1 + 3\lambda_2$, $3\lambda_1 + 3\lambda_2 - 3$, $3\lambda_1 + 2\lambda_2$, $4\lambda_1 + 2\lambda_2 - 3$ respectively. In particular, path $(0,0)$ is shortest for all values of $(\lambda_1, \lambda_2)$ that simultaneously satisfy the inequalities
\begin{align*}
    2\lambda_1 + 3\lambda_2 &\leq 3\lambda_1 + 3\lambda_2 - 3,\\
    2\lambda_1 + 3\lambda_2 &\leq 3\lambda_1 + 2\lambda_2,\\
    2\lambda_1 + 3\lambda_2 &\leq 4\lambda_1 + 2\lambda_2.
\end{align*}
which simplify to $\lambda_1 \geq 3$ and $\lambda_1\geq \lambda_2$. These set of inequalities defines a region in $\mathbb{R}^2$ where path $(0,0)$ is the shortest. Similarly, we can define regions for the other paths. This creates a partition of the space $\pi_{s,t}$. We will now claim that such a partition could simply be obtained through a \emph{superimposition} of the partitions $\pi_{s,v}$ and $\pi_{v,t}$ by capitalizing on the independence of paths between the nodes $s$ and $v$, and nodes $v$ and $t$ (as illustrated in \cref{fig:pspAG_s2t}). Here, we also use that fact that in a DAG if a shortest $s$ to $t$ path passes through $v$, its segment between $s$ and $v$, and its segment between $v$ and $t$ both have to be shortest by themselves.

More formally, the lines $\lambda_1 = \lambda_2$ and $\lambda_1 = 3$ together partition $\mathbb{R}^2$ into four regions and each of these regions corresponds to one of the $s$ to $t$ paths. For a given point, its side with respect $\lambda_1 = \lambda_2$ helps us identify which of the two edges between $s$ and $v$ is the shortest, and its side with respect to $\lambda_1 = 3$ helps us identify which of the two edges between $v$ and $t$ is the shortest. 

\begin{figure}
\centering
\begin{subfigure}[b]{\textwidth}
\centering
    \begin{tikzpicture}[
    node distance=3cm,
    mynode/.style={circle, draw, thick, minimum size=8pt, inner sep=2pt}
]
    \node[mynode] (S) at (0,0) {$s$};
    \node[mynode] (V) at (4,0) {$v$};
    \node[mynode] (T) at (8,0) {$t$};
    \tikzset{
        myarrow/.style={->, very thick}
    }
    \draw[myarrow] (S) to [bend left=50] 
        node[pos=0.5, above, yshift=6pt, font=\footnotesize] {$\lambda_1 + 2\lambda_2$} 
        node[pos=0.5, below, yshift=-3pt, font=\footnotesize, elegantgray] {\textsf{Edge 0}} (V);
    \draw[myarrow] (S) to [bend right=50] 
        node[pos=0.5, below, yshift=-6pt, font=\footnotesize] {$2\lambda_1 + \lambda_2$} 
        node[pos=0.5, above, yshift=3pt, font=\footnotesize, elegantgray] {\textsf{Edge 1}} (V);
    \draw[myarrow] (V) to [bend left=50] 
        node[pos=0.5, above, yshift=6pt, font=\footnotesize] {$\lambda_1 + \lambda_2$} 
        node[pos=0.5, below, yshift=-3pt, font=\footnotesize, elegantgray] {\textsf{Edge 0}} (T);
    \draw[myarrow] (V) to [bend right=50] 
        node[pos=0.5, below, yshift=-6pt, font=\footnotesize] {$2\lambda_1 + \lambda_2 - 3$} 
        node[pos=0.5, above, yshift=3pt, font=\footnotesize, elegantgray] {\textsf{Edge 1}} (T);
\end{tikzpicture}
\caption{Graph $G$ with linear edge weights. Note the the edges above are denoted \textsf{Edge 0}, \textsf{Edge 1}, \textsf{Edge 0}, \textsf{Edge 1}. Correspondingly, the four paths from $s$ to $t$ are denoted \textsf{Path (0,0)}, \textsf{Path (0,1)}, \textsf{Path (1,0)}, \textsf{Path (0,1)}.}
\label{fig:simpleAlmondGraph}
\end{subfigure}

\begin{subfigure}[b]{0.35\textwidth}
    \vspace{0.5cm}
    \centering
    \begin{tikzpicture}[scale=0.25]
    \clip (-12.5,-12.5) rectangle (15,15);
    
    \begin{scope}
    \clip (-10,-10) rectangle (10,10);
    \fill[elegantblue!30] (-10,-10) -- (10,10) -- (10,-10) -- cycle;
    \fill[elegantcoral!30] (-10,-10) -- (10,10) -- (-10,10) -- cycle;
    \end{scope}
    
    \foreach \x in {-10,-5,...,10}
        \draw[elegantgray!15, very thin] (\x,-10) -- (\x,10);
    \foreach \y in {-10,-5,...,10}
        \draw[elegantgray!15, very thin] (-10,\y) -- (10,\y);
    
    \draw[->, elegantgray!50, thick] (-10,0) -- (11,0) node[right, black, font=\large] {$\lambda_1$};
    \draw[->, elegantgray!50, thick] (0,-10) -- (0,11) node[above, black, font=\large] {$\lambda_2$};
    
    
    \draw[elegantlavender!80, line width=2.5pt] (-10,-10) -- (10,10);
    \node at (7, 8) [elegantlavender!80!black, font=\normalsize, rotate=45] {$\lambda_2 = \lambda_1$};
    
    
    \node at (-5, 5.5) [font=\large, elegantcoral!80!black] {\textsf{Edge 1}};
    \node at (-5, 3.5) [font=\tiny, elegantgray] {$2\lambda_1 + \lambda_2<\lambda_1 + 2\lambda_2$};
    
    \node at (5, -5.5) [font=\large, elegantblue!80!black] {\textsf{Edge 0}};
    \node at (5, -3.5) [font=\tiny, elegantgray] {$2\lambda_1 + \lambda_2>\lambda_1 + 2\lambda_2$};
\end{tikzpicture}
    \caption{Partition of $\mathbb{R}^2$ to indicate parametric shortest paths between $s$ and $v$}
    \label{fig:pspAG_s2v}
\end{subfigure}\qquad\qquad\qquad
\begin{subfigure}[b]{0.35\textwidth}
    \begin{tikzpicture}[scale=0.25]
    \clip (-12.5,-12.5) rectangle (15,15);
    
    \begin{scope}
    \clip (-15,-15) rectangle (15,15);
    \fill[elegantsage!30] (3,-10) -- (10,-10) -- (10,10) -- (3,10) -- cycle;
    \fill[elegantgold!30] (-10,-10) -- (3,-10) -- (3,10) -- (-10,10) -- cycle;
    \end{scope}
    
    \foreach \x in {-10,-5,...,10}
        \draw[elegantgray!15, very thin] (\x,-10) -- (\x,10);
    \foreach \y in {-10,-5,...,10}
        \draw[elegantgray!15, very thin] (-10,\y) -- (10,\y);
    
    \draw[->, elegantgray!50, thick] (-10,0) -- (11,0) node[right, black, font=\large] {$\lambda_1$};
    \draw[->, elegantgray!50, thick] (0,-10) -- (0,11) node[above, black, font=\large] {$\lambda_2$};
    
    
    \draw[elegantlavender!80, line width=2.5pt] (3,-10) -- (3,10);
    \node at (5, 8) [elegantlavender!80!black, font=\small, rotate=90] {$\lambda_1 = 3$};
    
    
    \node at (-5, 5.5) [font=\large, elegantgold!80!black] {\textsf{Edge 1}};
    \node at (-5, 3.5) [font=\tiny, elegantgray] {$2\lambda_1 + \lambda_2 - 3 < \lambda_1 + \lambda_2$};
    
    \node at (8, -5.5) [font=\large, elegantsage!80!black, rotate = 90] {\textsf{Edge 0}};
    \node at (6, -4.5) [font=\tiny, elegantgray, rotate = 90] {$2\lambda_1 + \lambda_2 - 3 > \lambda_1 + \lambda_2$};
\end{tikzpicture}
\caption{Partition of $\mathbb{R}^2$ to indicate parametric shortest paths between $v$ and $t$}
    \label{fig:pspAG_v2t}
\end{subfigure}
\begin{subfigure}[b]{\textwidth}
\vspace{0.75cm}
\centering
\begin{tikzpicture}[scale=0.25]
    \clip (-12.5,-12.5) rectangle (15,15);
    
    \begin{scope}
    \clip (-10,-10) rectangle (10,10);
    \fill[elegantblue!35] (-10,-10) -- (3,-10) -- (3,3) -- (-10,-10) -- cycle;
    \fill[elegantcoral!35] (-10,-10) -- (-10,10) -- (3,10) -- (3,3) -- cycle;
    \fill[elegantgold!35] (3,3) -- (3,-10) -- (10,-10) -- (10,10) -- cycle;
    \fill[elegantsage!35] (3,3) -- (3,10) -- (10,10) -- cycle;
    \end{scope}
    
    \foreach \x in {-10,-5,...,10}
        \draw[elegantgray!10, very thin] (\x,-10) -- (\x,10);
    \foreach \y in {-10,-5,...,10}
        \draw[elegantgray!10, very thin] (-10,\y) -- (10,\y);
    
    \draw[->, elegantgray!50, thick] (-10,0) -- (11,0) node[right, black, font=\large] {$\lambda_1$};
    \draw[->, elegantgray!50, thick] (0,-10) -- (0,11) node[above, black, font=\large] {$\lambda_2$};
    
    
    \draw[elegantlavender!70, line width=2.5pt] (-10,-10) -- (10,10);
    \node at (9.7, 8.5) [elegantlavender!80!black, font=\normalsize, rotate=45] {$\lambda_2 = \lambda_1$};
    
    \draw[elegantlavender!70, line width=2.5pt] (3,-10) -- (3,10);
    \node at (2.2, -8.5) [elegantlavender!80!black, font=\normalsize, rotate=90] {$\lambda_1 = 3$};
    
    
    \node at (-2.6, -6.5) [font=\large, elegantblue!80!black] {\textsf{Path (0,1)}};
    \node at (-2.6, -8) [font=\footnotesize, elegantgray] {$3\lambda_1 + 3\lambda_2 - 3$};
    
    \node at (-4.5, 5) [font=\large, elegantcoral!80!black] {\textsf{Path (1,1)}};
    \node at (-4.5, 3.5) [font=\footnotesize, elegantgray] {$4\lambda_1 + 2\lambda_2 - 3$};
    
    \node at (6.5, -3.5) [font=\large, elegantgold!80!black] {\textsf{Path (0,0)}};
    \node at (6.5, -5) [font=\footnotesize, elegantgray] {$2\lambda_1 + 3\lambda_2$};
    
    \node at (6.5, 12.5) [font=\large, elegantsage!80!black] {\textsf{Path (1,0)}};
    \node at (6.5, 10.7) [font=\footnotesize, elegantgray] {$3\lambda_1 + 2\lambda_2$};
    
    \fill[elegantlavender!80] (3,3) circle (6pt);
    \draw[white, thick] (3,3) circle (6pt);
\end{tikzpicture}
\caption{Overlaying (superimposing) the partitions in (b) and (c)}
    \label{fig:pspAG_s2t}
\end{subfigure}
\caption{Partition of $\mathbb{R}^2$ by hyperplanes created by regions corresponding to shortest paths from $s$ to $t$ in $G$}
\end{figure}

\begin{figure}[!h]
    \centering
    \begin{subfigure}[b]{0.35\textwidth}
\begin{tikzpicture}[
    node distance=0.75cm,
    vertex/.style={circle, draw, thick, minimum size=20pt, inner sep=0pt, fill=white} 
]

    \node[vertex] (mid0) {$s$};
    \node[vertex] (mid2) [right=3.8cm of mid0] {$t$}; 

    \node[vertex] (v_i) [right=1.5cm of mid0, yshift=0cm] {$u_i$};
    
    \node (dots_top) [above=0.2cm of v_i, inner sep=0pt] {$\vdots$};
    \node (dots_bottom) [below=0.2cm of v_i, inner sep=0pt] {$\vdots$};

    \node[vertex] (v2) [above=0.2cm of dots_top] {$u_2$};
    \node[vertex] (v1) [above=0.2cm of v2] {$u_1$};
    \node[vertex] (vnminus1) [below=0.2cm of dots_bottom] {$u_{n-1}$};
    \node[vertex] (vn) [below=0.2cm of vnminus1] {$u_n$};

    \path[->, thick]
    (mid0) edge (v1)
    (mid0) edge (v2)
    (mid0) edge (v_i)
    (mid0) edge (vnminus1)
    (mid0) edge (vn)
    
    (v1) edge (mid2)
    (v2) edge (mid2)
    (v_i) edge (mid2)
    (vnminus1) edge (mid2)
    (vn) edge (mid2);

\end{tikzpicture}

    \caption{Graph $G_1$ of length 2}
    \label{fig:overview_gadget1}
    \end{subfigure}\quad\quad
    \begin{subfigure}[b]{0.35\textwidth}
\begin{tikzpicture}[
    node distance=0.75cm,
    vertex/.style={circle, draw, thick, minimum size=20pt, inner sep=0pt, fill=white} 
]

    \node[vertex] (mid0) {$s'$};
    \node[vertex] (mid2) [right=3.8cm of mid0] {$t'$}; 

    \node[vertex] (v_i) [right=1.5cm of mid0, yshift=0cm] {$v_i$};
    
    \node (dots_top) [above=0.2cm of v_i, inner sep=0pt] {$\vdots$};
    \node (dots_bottom) [below=0.2cm of v_i, inner sep=0pt] {$\vdots$};

    \node[vertex] (v2) [above=0.2cm of dots_top] {$v_2$};
    \node[vertex] (v1) [above=0.2cm of v2] {$v_1$};
    \node[vertex] (vnminus1) [below=0.2cm of dots_bottom] {$v_{n-1}$};
    \node[vertex] (vn) [below=0.2cm of vnminus1] {$v_n$};

    \path[->, thick]
    (mid0) edge (v1)
    (mid0) edge (v2)
    (mid0) edge (v_i)
    (mid0) edge (vnminus1)
    (mid0) edge (vn)
    
    (v1) edge (mid2)
    (v2) edge (mid2)
    (v_i) edge (mid2)
    (vnminus1) edge (mid2)
    (vn) edge (mid2);

\end{tikzpicture}
    \caption{Graph $G_2$ of length 2}
    \label{fig:overview_gadget2}
    \end{subfigure}
    \caption{Disjoint graphs $G_1$ and $G_2$}
    \label{fig:overview_gadgets}
\end{figure}

Let $G_1$ and $G_2$ be two disjoint graphs as shown in \cref{fig:overview_gadgets}. In graph $G_1$, nodes $u_1, \ldots, u_n$ are $n$ distinct degree-$2$ nodes that connect $s$ and $t$ by providing $n$ disjoint paths. For each $i\in[n]$, let $L_i(\lambda_1,\ldots, \lambda_d)$ be the linear weight function associated with the path $s\rightarrow u_i \rightarrow t$. Similarly, in graph $G_2$, nodes $v_1, \ldots, v_n$ are $n$ distinct degree-$2$ nodes that connect $s'$ and $t'$ by providing $n$ disjoint paths. For each $i\in[n]$, let $L'_i(\lambda_1,\ldots, \lambda_d)$ be the linear weight function associated with the path $s'\rightarrow v_i \rightarrow t'$. 

Through the generalization of the afore mentioned discussion, for each $i\in [n]$ we can infer that the set of values of $(\lambda_1, \ldots, \lambda_d)$ from $ \mathbb{R}^d$ for which the path $s\rightarrow u_i \rightarrow t$ in graph $G_1$ is the shortest is specified by the set of linear inequalities $\mathcal{J}_i = \{L_i - L_j \leq 0 \mid j\in [n]\setminus {i} \}$\footnote{If the set of inequalities are inconsistent then the region would be empty. For example, with $L_1 = \lambda_1 - \lambda_2, L_2 = \lambda_2 - \lambda_1, L_3 = 1$, path 3 can never be a shortest path, as both $\lambda_1 - \lambda_2 \geq 1$ and $\lambda_2 - \lambda_1 \geq 1$ cannot be satisfied simultaneously.}. Let the corresponding hyperplanes that help create this region be $\mathcal{H}_i = \{ L_i - L_j = 0 \mid j\in [n]\setminus {i} \}$. For the sake of exposition, let us assume that all our sets of inequalities are consistent. 

We first note that these hyperplanes $\mathcal{H} = \mathcal{H}_1\cup \cdots\cup \mathcal{H}_n$ partition\footnote{That is, every point in $\mathbb{R}^d$ belongs to at least one of the regions $\mathcal{J}_1, \ldots, \mathcal{J}_n$ (assuming they are all consistent). For any arbitrary $\bar{a}\in \mathbb{R}^d$, compute the values $L_1(\bar{a}), \ldots, L_n(\bar{a})$ and take the $\mathrm{argmin}$ of these values. Call it $i^*$. It is easy to see that $\bar{a}$ simultaneously satisfies the set of inequalities $\mathcal{J}_{i^*}$ and lies in the region defined by it.} $\mathbb{R}^d$ such that each region thus created corresponds to exactly one path from $s$ to $t$. Let us denote this partition by $\pi_{s,t}$. Note that a parametric shortest path can appear with multiplicity in more than one contiguous regions created by these hyperplanes. That is, we are creating a map from the regions created through the partition of $\mathbb{R}^d$ with the set of hyperplanes $\mathcal{H}$, to the parametric shortest paths. Analogously, we can obtain a partition $\pi_{s',t'}$ with the set of hyperplanes $\mathcal{H}'$ such that each region in it corresponds to a parametric shortest path between $s'$ and $t'$. Note that the cardinality of  $\mathcal{H}\cup\mathcal{H}'$ is at most $2\binom{n}{2}$.

Similar to our earlier analysis, by superimposing the partitions $\pi_{s,t}$ and $\pi_{s',t'}$, we get a new partition in which the regions simultaneously indicate the shortest paths between $s\rightsquigarrow t$ and $s' \rightsquigarrow t'$. It is important to note that the above analysis also holds for any graphs $G_1$ and $G_2$ (even when they share edges or vertices). 

In case $G_1$ and $G_2$ were connected in series (with $t = s'$, source $s$ and sink $t'$), the total number of $s\rightsquigarrow t'$ paths are $n^2$ many and we could have easily formed hyperplanes by comparing each path with the rest simultaneously. This could potentially create $\binom{n^2}{2}$ many hyperplanes and this is far more than the count of the regions obtained through superimposition. 

Having established the superimposition principle for pairs of graphs, we now apply this technique recursively to a specific family of layered DAGs. This will allow us to derive concrete bounds on the number of parametric shortest paths. 

Assume that each layer is a complete bipartite graph and the edge weights vary as functions of $d$ parameters $\lambda_1, \ldots, \lambda_d$. Let $B$ be a set of pairs of vertices $\{(x,y)\mid x\in \text{Layer}_{2i-1}, y\in\text{Layer}_{2i+1}~\text{for}~i\in\{1, \ldots, \lfloor\frac{n-1}{2}\rfloor\} \}$. That is, $B$ contains ordered pairs of vertices across alternate odd layers. For each pair $(x,y)\in B$, there is a subgraph $G_{x,y}$ of length $2$ with $x$ as source and $y$ as sink, and graph $G_{x,y}$ resembles the graphs $G_1, G_2$ (\cref{fig:overview_gadgets}) in structure. Using the afore mentioned analysis, for each pair $(x,y) \in B$ we get a partition $\pi_{x,y}$ of $\mathbb{R}^d$ (through a set of hyperplanes $\mathcal{H} = \mathcal{H}_{x,y}$) such that every region in it corresponds to a parametric shortest path in $G_{x,y}$. By superimposing the partitions $\pi_{x,y}$ for all $(x,y) \in B$, we get finer regions of $\mathbb{R}^d$ such that each region \emph{uniquely} identifies a parametric shortest path between all the pairs $(x,y)$. In other words, these finer regions of $\mathbb{R}^d$ are generated by partitioning it with the set of hyperplanes $$\mathcal{H} = \bigcup_{(x,y)\in B} \mathcal{H}_{x,y}\,.$$ For each region thus created, we create a new DAG instance $G^{(x,y)}_{n,n/2}$ from $G_{n,n}$, with the following properties. Delete all vertices in even layers, along with the incident edges. Connect each pair $(x,y) \in B$ using a new edge such that the weight of this new edge $(x,y)$ is the cumulative weight of the shortest $x\rightsquigarrow y$ path identified through this region. 

For each region created through the hyperplanes, we create a new instance of the parametric shortest path problem on a graph of length $n/2$. The number of new instances created is equal to the number of regions generated by the hyperplanes $\mathcal{H}$ and this quantity can be bound efficiently (see \cref{thm:space_partitioning_theorem}). We recursively partition $\mathbb{R}^d$ through these new instances until we reach a trivial base case. Thus, 
\begin{align*}
    |\PSP(G_{n,n})| \leq \left(\text{\# of regions created by}~\mathcal{H}\right)\cdot \max_{(x,y)\in B}\left(|\PSP(G^{(x,y)}_{n,n/2})|\right)\,.
\end{align*}

In this paper, we consider edge weights that either vary as linear functions of $d$ real-valued parameters, and also edge weights that are univariate polynomials of degree at most $d$. The proof technique for the latter is not very different from the one for the former. 

We first present our proof for directed acyclic graphs (DAGs) (\Cref{thm:linear_upper_bound}). 
The result for DAGs can be lifted to count the size of an MSPC over a \emph{feasible}\footnote{A feasible region is the subset of $\mathbb{R}^d$ (or the set of values of $\overline{\lambda} = (\lambda_1,\lambda_2\ldots,\lambda_d)$) wherein the given directed graph (respectively, undirected graph) contains no negative-weight cycles (respectively, negative-weight edges). See \Cref{def:feasibleregion} for a formal definition.} region $S$, via a reduction to:

\begin{itemize}
    \item directed graphs with no negative-weight cycles 
    (\Cref{thm:directedgraphsnonegcyc}), and
    \item undirected graphs with no negative-weight edges 
    (\Cref{thm:undirectedgraphsnonegedg}).
\end{itemize}
More details of these reductions can be found in \cref{subsec:reduction2DAG}.

\section{Preliminaries}\label{sec:preliminaries}

In this section, we establish the terminologies and definitions that will be used throughout this work. 

\subsection{Basic Notation}

For a positive integer $n \in \mathbb{N}$, we use the notation $[n]$ to denote the set $\{1,2,\ldots,n\}$. Unless otherwise explicitly stated, all graphs considered in this work are simple (containing no multi-edges or self-loops), directed, and connected, with vertex set $V$ where $|V| = n$. All logarithms considered in this paper are taken to the base $2$ unless explicitly stated otherwise.

When working with parameter vectors, we employ the notation $\overline{\lambda} = (\lambda_1,\lambda_2\ldots,\lambda_d)$ to represent a $d$-tuple of real-valued parameters $\lambda_i \in \mathbb{R}$ for $i \in [d]$. For a subset $S \subseteq \mathbb{R}^d$ of the parameter space, we define $\PSP_{S}(G)$ as the collection of parametric shortest paths from a designated source vertex $s$ to a sink vertex $t$ when the parameter vector takes values in $S$. As a notational convenience, when $S = \mathbb{R}^d$ (i.e., when considering the entire parameter space), we use the abbreviated notation $\PSP(G) := \PSP_{\mathbb{R}^d}(G)$. In cases where the source and sink vertices differ from $s$ and $t$, we will specify them explicitly.

\subsection{Feasible Regions}

The shortest path problem over directed graphs is studied only on graphs with no negative-weight cycles, as their presence could make the shortest path have a length of $-\infty$.

Moreover, the standard conversion of an undirected graph to a directed graph (by replacing each undirected edge by two directed edges in opposite directions) could create negative cycles if the undirected graph that we started with had negative weight edges. 

To deal with such situations, we define the notion of a feasible region (similar feasible regions have been used in earlier works~\cite{Carstensen_1983}).
\begin{definition}\label{def:feasibleregion}
    With respect to parametric shortest paths, a feasible region is defined as follows.
    \begin{itemize}
        \item Let $G =(V,E)$ be a directed graph whose edge weights are all linear functions of $\overline{\lambda} = (\lambda_1,\lambda_2\ldots,\lambda_d)$. Then, a region $S\subseteq \mathbb{R}^d$ is called a feasible region for $G$ if for every point $\overline{x}\in\mathbb{R}^d$, the fixed-edge weight graph obtained by substituting $\overline{\lambda}=\overline{x}$ in $G$ contains no negative-weight cycles.
        \item Let $G =(V,E)$ be an undirected graph whose edge weights are all linear functions of $\overline{\lambda} = (\lambda_1,\lambda_2\ldots,\lambda_d)$. Then, a region $S\subseteq \mathbb{R}^d$ is called a feasible region for $G$ if for every point $\overline{x}\in\mathbb{R}^d$, the fixed-edge weight graph obtained by substituting $\overline{\lambda}=\overline{x}$ in $G$ contains no negative-weight edges.
    \end{itemize}
\end{definition}

\subsection{Supporting Results from Combinatorial Geometry}

The analysis of parametric shortest paths often requires results from combinatorial geometry concerning hyperplane arrangements. We state here a classical result that bounds the complexity of such arrangements.

\begin{theorem}[Hyperplane Arrangement Complexity~\cite{Vapnik_Chervonenkis_Ya_2015, Winder_1966}]\label{thm:space_partitioning_theorem}
Consider a $d$-dimensional Euclidean space $\mathbb{R}^d$ that is partitioned by $t$ hyperplanes. Let $r$ denote the number of regions formed by these hyperplanes. Then,
\[
r \leq \sum_{i=0}^{d} \binom{t}{i} \leq t^d + 1.
\]
\end{theorem}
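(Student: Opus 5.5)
The plan is a deletion--restriction induction on the number $t$ of hyperplanes, together with induction on the dimension $d$. Here ``regions'' means the connected components of $\mathbb{R}^d$ minus the union of the hyperplanes; each is an open convex polyhedron, being an intersection of open half-spaces. Let $f(t,d)$ be the maximum number of regions of an arrangement of $t$ hyperplanes in $\mathbb{R}^d$, where an affine space of dimension $d$ is identified with $\mathbb{R}^d$. The base cases are $f(0,d)=1$ for all $d\ge 0$, and $f(t,0)=1$, since $\mathbb{R}^0$ is a point and contains no genuine hyperplanes. The goal is the recurrence
\[
f(t,d)\le f(t-1,d)+f(t-1,d-1)\qquad (t,d\ge 1).
\]
Given this recurrence, induction and Pascal's rule $\binom{t}{i}=\binom{t-1}{i}+\binom{t-1}{i-1}$ give $f(t,d)\le\sum_{i=0}^{d}\binom{t}{i}$ immediately.

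To prove the recurrence, fix an arrangement $H_1,\dots,H_t$ and add $H=H_t$ last. Each region $R$ of the arrangement $H_1,\dots,H_{t-1}$ is convex and open. If $R\cap H=\emptyset$, then $R$ remains a single region. If $R\cap H\neq\emptyset$, then $R\setminus H$ consists of exactly two regions, namely $R$ intersected with the two open sides of $H$. Both pieces are nonempty because $R$ is open and meets $H$, and both are convex. So the count increases by exactly the number of old regions meeting $H$.

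Those sets $R\cap H$ are precisely the regions of the induced arrangement inside $H\cong\mathbb{R}^{d-1}$ formed by the sets $H\cap H_j$, $j<t$. Each $H\cap H_j$ is a hyperplane of $H$, or empty (parallel case), or all of $H$ (when $H_j=H$). The last case is harmless: if $H$ duplicates an earlier hyperplane, adding it creates no new regions, so we may assume it does not occur. Each nonempty $R\cap H$ is convex, hence connected, and avoids every $H\cap H_j$, so it lies inside a single induced region. Conversely, every induced region lies inside a single old region $R$, and disjoint old regions give disjoint traces. So the sets $R\cap H$ are exactly the induced regions, and there are at most $f(t-1,d-1)$ of them, since at most $t-1$ hyperplanes of $H$ are involved. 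The main obstacle is this bookkeeping: showing a cut region splits into exactly two pieces and that regions met by $H$ correspond bijectively to induced regions. Convexity of regions handles both parts cleanly.

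For the second inequality $\sum_{i=0}^d\binom{t}{i}\le t^d+1$ (with $d\ge1$), first handle small $t$ directly: $t=0$ gives $1\le 1$, and $t=1$ gives $2\le 2$. For $t\ge 2$, induct on $d$. The case $d=1$ is the equality $1+t=t+1$. For $d\ge 2$, the induction hypothesis and $\binom{t}{d}\le t^d/d!\le t^d/2$ give
\[
\sum_{i=0}^{d}\binom{t}{i}\le t^{d-1}+1+\tfrac{t^d}{2}\le t^d+1,
\]
where the last step uses $t^{d-1}\le t^d/2$, which holds for $t\ge 2$.
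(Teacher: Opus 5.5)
Your proof is correct. The paper gives no proof of this statement; it quotes it as a classical fact with citations to Vapnik--Chervonenkis and Winder. What you add is a self-contained argument. The deletion--restriction recurrence $f(t,d)\le f(t-1,d)+f(t-1,d-1)$, closed off by Pascal's rule, is the standard proof of the first inequality. Your induction on $d$ via $\binom{t}{d}\le t^d/d!\le t^d/2$ correctly proves the second inequality, which the paper only asserts. Your deletion step is an exact count: adding $H$ increases the number of regions by exactly the number of regions of the induced arrangement in $H$. So the same argument also shows the bound is attained by arrangements in general position, which the citation alone does not give.

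Two points you use implicitly deserve a line each in a write-up.

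\begin{itemize}
\item The induced arrangement in $H$ may have fewer than $t-1$ distinct hyperplanes, because of parallels or coincidences among the $H\cap H_j$. You then need $f(k,d-1)\le f(t-1,d-1)$ for $k\le t-1$. This follows from your own observation that adding a hyperplane never decreases the number of regions.
\item Reading ``regions'' as connected components of the complement, i.e.\ open cells, is necessary and not just a convention. If lower-dimensional faces were counted too, the bound would fail: already $t$ points on a line give $2t+1$ faces. This matters for how the paper later uses the theorem. There, points lying on the hyperplanes are not among the $T(N)$ counted regions, so they must be absorbed into adjacent cells (the paper's tie-breaking convention) rather than counted separately.
\end{itemize}
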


\Cref{thm:space_partitioning_theorem} provides a fundamental upper bound on the complexity of hyperplane arrangements and has direct implications for the number of parametric shortest paths in the setting where the edge weights vary as linear functions of $d$ parameters.

\subsection{Davenport-Schinzel Sequences}\label{ssec:davenport-schinzel_sequences}
The above characterization does not help when the edge weights vary as univariate polynomials of degree $d$. In this setting, like the earlier works, we depend on the combinatorial characterization through Davenport-Schinzel sequences.

\begin{definition}
    Given a finite set of symbols $X$, a sequence $U=(u_1,u_2,\ldots,u_t)$ is a Davenport-Schinzel sequence of order $s$ if it satisfies the following properties.
    \begin{itemize}
        \item $\forall i\in[t]$, $u_i$ is a symbol coming from $X$,
        \item No two consecutive symbols in the sequence $U$ are the same,
        \item If $x_1,x_2\in X$ are distinct symbols, then $U$ doesn't contain a subsequence $(\ldots,x_1,\dots,x_2,\ldots,x_1,\ldots,x_2,\ldots)$ consisting of $s+2$ alternations between $x_1$ and $x_2$.
    \end{itemize}
\end{definition}

In this work, we use Davenport-Schinzel sequences to study the lower envelope formed by a set of univariate polynomials of degree at most $d$. Since any two degree $d$ univariate polynomials can be equal in at most $d$ points, They can alternate at most $d+1$ many times. Therefore, the order of the corresponding Davenport-Schinzel sequence will be $d$. The Davenport-Schinzel sequences have tight bounds when the order of the sequence is constant (see \cite{Davenport_1970}, \cite{Agarwal_Sharir_Shor_1989}, \cite{Nivasch_2010}). However, when the degree $d$ is arbitrary, we do not have good upper bounds on the size of Davenport-Schinzel sequences (of arbitrary order) that we can use. The only known upper bound on the size of Davenport-Schinzel sequences that is applicable here is the trivial $\binom{N}{2}d+1$ (see \cite[p.~3]{Klazar_2002}).

\section{Proofs}\label{sec:generic_upper_bound_technique}


\subsection{Partitioning $\mathbb{R}^d$ using $\mathsf{poly}(n)$ Hyperplanes}
We first present an abstracted result for distinct settings of edge weights that we consider in the paper. We then invoke the necessary space partitioning lemmas in each case and get the final bounds.

\begin{theorem}\label{thm:unifiedbound}
    Let $n, \ell$ be  natural numbers such that $\ell\leq n$. Let $S \subseteq \mathbb{R}^d$. Let $G$ be a single-source and single-sink layered directed acyclic graph with $\ell$ layers and at most $n$ vertices per layer. Let the edge weights be multivariate polynomials, denoted by $f_{e}(\lambda_1, \ldots, \lambda_d)$. Let $N:= n^5$, and let the number of partitions of $S$ through $N$ hyperplanes be $T(N)$. Then,
    \begin{equation*}
        |\PSP(G)| \in T(N)^{O(\log({\ell}))}.
    \end{equation*}
\end{theorem}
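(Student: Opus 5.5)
We prove the statement by induction on $\ell$, following the ``layer-halving'' strategy from the proof overview. The induction hypothesis is
\[
|\PSP_{S'}(G')| \le T(N)^{c\lceil \log \ell'\rceil}
\]
for every layered DAG $G'$ with $\ell' < \ell$ layers, at most $n$ vertices per layer, and edge weights drawn from the same class, and for every region $S'$ of the kind produced below. Here $c$ is an absolute constant. One caveat: we need $T(N)$ to be monotone under restriction, meaning that the number of cells into which $N$ further surfaces cut a cell of $S$ is again at most $T(N)$. We read $T(N)$ as the maximum over such sub-regions, so this holds by definition. For linear weights it is simply \Cref{thm:space_partitioning_theorem}.

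\textbf{Base case.} When $\ell\le 2$ there is at most one edge per path, possibly with parallel alternatives through one middle vertex. The shortest path is then the argmin of at most $n$ weight functions. The $\binom{n}{2}$ comparison surfaces $f_P=f_{P'}$ cut $S$ into at most $T(N)$ cells, and on each cell a single path is shortest. This gives $|\PSP_S(G)|\le T(N)$.

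\textbf{Halving step.} Let the layers be $L_1,\dots,L_\ell$.
\begin{enumerate}
\item For each pair $(x,y)$ with $x\in L_{2i-1}$ and $y\in L_{2i+1}$, the $x\rightsquigarrow y$ paths are exactly $x\to z\to y$ with $z\in L_{2i}$. There are at most $n$ of them, with weights $f_{xz}+f_{zy}$.
\item Collect all comparison surfaces $f_{xz}+f_{zy}=f_{xz'}+f_{z'y}$ over all such pairs and all $z,z'$. Counting gives at most $\ell\cdot n^2\cdot n^2\le n^5=N$ surfaces; this is where the choice $N=n^5$ comes from.
\item These surfaces cut $S$ into at most $T(N)$ cells. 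On each cell $C$, for every pair $(x,y)$, a fixed middle vertex $z_{x,y}(C)$ gives a shortest $x\rightsquigarrow y$ path throughout $C$.
\item Form the graph $G_C$ on the odd layers only, with an edge $(x,y)$ of weight $f_{xz_{x,y}}+f_{z_{x,y}y}$. It has at most $\lceil \ell/2\rceil+1$ layers and at most $n$ vertices per layer, and its weights lie in the same class. If $\ell$ is even, keep $L_\ell$, which is the sink, as a final layer and treat the last step like the base case; this adds only a constant to the layer count.
\end{enumerate}

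\textbf{Correctness on a cell.} For every $\overline{x}\in C$, shortest $s\rightsquigarrow t$ paths in $G$ correspond to shortest paths in $G_C$ with each edge $(x,y)$ expanded through $z_{x,y}(C)$. Every $s$-$t$ path in $G$ visits each odd layer, so its weight is at least that of the corresponding $G_C$ path, with equality after substituting the optimal middle vertices. Hence a $\PSP_C(G_C)$, with its edges expanded, covers $C$ for $G$, and
\[
|\PSP_S(G)| \le \sum_C |\PSP_C(G_C)| \le T(N)\cdot T(N)^{c\lceil\log(\ell/2+O(1))\rceil}.
\]
The recurrence $R(\ell)\le T(N)\,R(\lceil \ell/2\rceil+1)$ unrolls to $T(N)^{O(\log \ell)}$, as claimed.

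\textbf{Main obstacle.} The delicate point is the recursion being applied to the restricted domain $C$ rather than $S$. The inner partition must use only $N$ surfaces inside $C$ and still be bounded by $T(N)$, and the weights of $G_C$ must remain in the same function class, since sums of two edge weights are still polynomials (or linear). We handle this by defining $T(N)$ as the maximum over sub-regions, as described above. A secondary care point is the $+1$ in the layer count, which only affects the constant in the exponent.
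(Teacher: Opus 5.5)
Your proposal is correct and follows the paper's proof essentially step for step: the same odd-layer halving, the same at most $n^2$ comparison surfaces per pair of alternate-layer vertices for at most $n^5$ in total, and the same recurrence $|\PSP_S(G)|\le \sum_C |\PSP_C(G_C)|$ unrolled over $O(\log \ell)$ levels (the paper avoids your parity bookkeeping by assuming $\ell$ is a power of $2$). The one real difference is how you justify subadditivity: your covering argument (the expanded covers of the $G_C$ jointly cover $S$) bounds the minimum cover directly and needs no tie-breaking, unlike the paper's injection. One small fix: the halved graph has exactly $\lfloor \ell/2\rfloor+1$ layers, which is less than $\ell$ for $\ell\ge 3$, whereas your stated bound $\lceil \ell/2\rceil+1$ does not decrease at $\ell=3$; this is harmless because your base case, as you describe it, already covers the three-layer case.
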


\begin{proof}
Without loss of generality, assume that $\ell$ is a power of $2$. Proof of this theorem proceeds via induction on the length of $G$.\\
\textbf{Base case:} Let $\ell = 2$. That is, the source $s$ and the sink $t$ are connected via at most $n$ intermediate nodes $u_1, \ldots, u_n$ as shown in \cref{fig:base_gadget}. For all $i\in[n]$, $i$th path $s\rightarrow u_i \rightarrow t$ is the shortest path for those points $\overline{a}\in S$ that simultaneously satisfy the inequalities $$\mathcal{J}_i = \{f_{(s, u_i)}(\overline{\lambda}) + f_{(u_i, t)}(\overline{\lambda}) \leq f_{(s, u_j)}(\overline{\lambda}) + f_{(u_j, t)}(\overline{\lambda}) \mid j \in [n]\setminus \{i\}\}\,. $$ 

Further, every point in $S$ lies on or in one of the sides of these hyperplanes defined by the following equations.
\begin{align*}
    \mathcal{H}_0 = \{f_{(s, u_i)}(\overline{\lambda}) + f_{(u_i, t)}(\overline{\lambda}) = f_{(s, u_j)}(\overline{\lambda}) + f_{(u_j, t)}(\overline{\lambda}) \mid 1\leq i<j\leq n\}\,.
\end{align*}

In other words, these hyperplanes partition the space $S$ such that each point $\overline{a}\in S$ satisfies at least one of $\mathcal{J}_1, \ldots, \mathcal{J}_n$ and constructively we get that index by computing $\mathrm{argmin}_{i\in [n]}\{f_{(s, u_i)}(\overline{a}) + f_{(u_i, t)}(\overline{a})\}$. It is easy to see that $|\mathcal{H}_0|$ is at most $n^2$. In each of the regions in the partition created by $\mathcal{H}_0$, there is a unique parametric shortest path. So, the number of shortest paths is at most the number of regions formed by $\mathcal{H}_0$.
\begin{equation*}
    |\PSP(G)| \le T(n^2) \le T(n^5).
\end{equation*}

\begin{figure}
\centering
\begin{tikzpicture}[scale=0.9,
    node distance=0.75cm,
    vertex/.style={circle, draw, thick, minimum size=20pt, inner sep=0pt, fill=white} 
]

    \node[vertex] (mid0) {$s$};
    \node[vertex] (mid2) [right=3.8cm of mid0] {$t$}; 

    \node[vertex] (v_i) [right=1.5cm of mid0, yshift=0cm] {$u_i$};
    
    \node (dots_top) [above=0.2cm of v_i, inner sep=0pt] {$\vdots$};
    \node (dots_bottom) [below=0.2cm of v_i, inner sep=0pt] {$\vdots$};

    \node[vertex] (v2) [above=0.2cm of dots_top] {$u_2$};
    \node[vertex] (v1) [above=0.2cm of v2] {$u_1$};
    \node[vertex] (vnminus1) [below=0.2cm of dots_bottom] {$u_{n-1}$};
    \node[vertex] (vn) [below=0.2cm of vnminus1] {$u_n$};

    \path[->, thick]
    (mid0) edge (v1)
    (mid0) edge (v2)
    (mid0) edge (v_i)
    (mid0) edge (vnminus1)
    (mid0) edge (vn)
    
    (v1) edge (mid2)
    (v2) edge (mid2)
    (v_i) edge (mid2)
    (vnminus1) edge (mid2)
    (vn) edge (mid2);

\end{tikzpicture}
    \caption{Graph $G$ of length 2}
    \label{fig:base_gadget}
    \vspace{-0.3cm}
\end{figure}

\noindent\textbf{Inductive hypothesis:} Assume that the statement is true for all lengths $\leq\ell/2$.
\begin{equation*}
    |\PSP(G)| \le T(n^5)^{\log(\frac{\ell}{2})}. \qquad
\end{equation*}
\noindent\textbf{Increment step:} Let length of $G$ be $\ell$. Let $B$ be the set of pairs of vertices in the alternate layers as follows.
$$B = \left\{(u,v)\mid u\in\text{Layer}_{2i-1}, v\in\text{Layer}_{2i+1}~\text{for all}~ 1\leq i\leq \frac{\ell}{2}-1 \right\}\,.$$

For each pair $(u,v) \in B$, let the graph $G^{(u,v)}$ be the subgraph of $G$  induced on the vertices $u,v$ and all the vertices in the layer between $u$ and $v$. This is similar in structure to the graph in the base case (see \cref{fig:base_gadget}). Similar to the base case, we obtain a partition $\pi_{(u,v)}$ of $S$ through the hyperplanes $\mathcal{H}_{(u,v)}$ and within each region thus created, parametric shortest path between $u$ and $v$ is the same for every point.

Let $\mathcal{H} = \cup_{(u,v)\in B}\mathcal{H}_{(u,v)}$. As before, $|\mathcal{H}_{(u,v)}|\leq n^2$ and thus $|\mathcal{H}|\leq n^5$ (since $|B|\leq n^3$). Let $\pi$ be the partition of the space $S$ through the set of hyperplanes $\mathcal{H}$. Let $T = T(n^5)$ and $S_1, \ldots, S_T$ be the regions created by $\pi$.

\begin{observation}\label{obs:regionspaths} For every $j$, all points $\overline{a}$ within $S_j$ are on the same side with respect to all hyperplanes in $\mathcal{H}$. In particular, for every pair $(u,v)\in B$, the shortest path from $u$ to $v$ is the same at all points within $S_j$. For points that lie on the separating hyperplanes, a shortest path is chosen according to a tie-breaking convention (order the vertices of the graph is some arbitrary way, and choose the path that occurs lexicographically earlier).
\end{observation}
Using this observation, we construct $T$ many new graphs $G_j$ corresponding to each region $S_j$, as follows.
\begin{itemize}
    \item Vertex set of $G_j$ consists of all vertices in the odd layers of the graph $G$, and $t$.
    \item For each pair $(u,v)\in B$, connect it with an edge whose edge weight is given by the parametric weight of the unique parametric shortest path between $u$ and $v$ corresponding to the region $S_j$.
    \item Retain all the incoming edges into $t$ along with the original edge weights.
\end{itemize}

We now make the following claim.

\begin{claim}\label{clm:subadditive}
    Let $\PSP_S(G)$ denote the set of $s$ to $t$ paths that show up as shortest paths of $G$ in $S$. Let $\PSP_{S_j}(G_j)$ be defined similarly. Then,
    $$|\PSP_S(G)| \leq \sum_{j=1}^T |\PSP_{S_j}(G_j)|\,.$$
\end{claim}

\begin{proof}[Proof of \Cref{clm:subadditive}:]
    Towards the proof, it is sufficient to establish an injective map from $\PSP_S(G)$ to $\cup_{i=1}^T \PSP_{S_j}(G_j)$, in two steps. 
    \begin{enumerate}
        \item For every path $P\in \PSP_S(G)$ there is a corresponding path $\tilde{P}\in \bigcup_{i=1}^T \PSP_{S_j}(G_j)$.
        \item Two distinct paths $P_1, P_2$ cannot map to the same path in $\bigcup_{i=1}^T \PSP_{S_j}(G_j)$.
    \end{enumerate}
    
    Let $P$ be an arbitrarily chosen path from $\PSP_S(G)$. Let $R_P \subseteq S$ be the region of parameter values over which $P$ is the shortest path from $s$ to $t$ in $G$. Since the regions $S_1, \ldots, S_T$ partition $S$, and $R_P \subseteq S$, we have that $R_P$ must intersect at least one of the regions from $S_1, \ldots, S_T$. Let $R_P \cap S_j \neq \emptyset$ for some $j \in [T]$. 
    A key observation that we make here is that between any two vertices $u \in \text{Layer}_{2i-1}$ and $v \in \text{Layer}_{2i+1}$ that $P$ passes through, the sub-path of $P$ from $u$ to $v$ must be the same as the unique shortest path between $u$ and $v$ in $G$ for every  $\overline{a}\in S_j$. For each pair of consecutive odd layers, we replace each segment of $P$ with the unique shortest path between its endpoints in $S_j$, and concatenating these gives a corresponding path $\tilde{P}\in \PSP_{S_j}(G_j)$.
    
    Let us suppose that two distinct paths $P_1, P_2 \in \PSP_S(G)$ correspond to the same projected path $\tilde{P}\in \PSP_{S_j}(G_j)$. Then $P_1$ and $P_2$ must pass through the same vertices in the odd layers. Now we argue that for every pair $(u,v)$ in the consecutive odd layers, the intermediate vertex between $u$ and $v$ on both the paths must be the same\footnote{Here, we implicitly invoke the tie-breaking convention described in \Cref{obs:regionspaths}.}. If the intermediate nodes for paths $P_1$ and $P_2$ were distinct within a region $S_j$ for a pair $(u,v)\in B$, this contradicts the uniqueness of the shortest path between $u$ and $v$ for that region (where the uniqueness is guaranteed by the definition of $S_j$ and construction of $G_j$).\qedhere
\end{proof}

\cref{clm:subadditive} shows that a path $P\in \PSP_S(G)$ could appear as a projection in various $\PSP_{S_j}(G_j)$ but two distinct paths $P_1, P_2\in\PSP_S(G)$ do not project down to the same path in $\PSP_{S_j}(G_j)$. Graph $G_j$ has length $\ell/2$ (and $\ell/2 + 1$ layers) and from the inductive hypothesis, we have $|\PSP_{S_j}(G_j)| \leq T^{\log(\ell/2)}$ for each $j$. Thus,
\begin{equation*}
    |\PSP_S(G)| \leq T \cdot T^{\log(\ell/2)} = T^{1+\log(\ell/2)} = T^{\log(\ell)} \in T(n^5)^{O(\log(\ell))}.\qedhere
\end{equation*}

\end{proof}

\subsection{Main Result: Directed Acyclic Graphs (\Cref{thm:linear_upper_bound})}\label{ssec:linear_edge_weights}

Here, we prove our main result for directed, acyclic graphs with linear edge weights.

\linearupperbound*

\begin{proof}
    This proof simply invokes \Cref{thm:unifiedbound} for the specific case of linear edge weights. The edge weights are of the form
    \begin{equation*}
        \mathsf{wt}(e) = \sum_{i\in[d]}a_{e,i}\lambda_i + a_{e,d+1}.
    \end{equation*}
    Then, the set $\mathcal{H}$ contains at most $n^5$ hyperplanes and $T(n^5)$ refers to the number of regions formed by the hyperplanes in $\mathcal{H}$. Due to \Cref{thm:space_partitioning_theorem}, we get
    \begin{equation*}
        T(n^5) \le n^{5d} + 1.
    \end{equation*}
    This along with the fact $\ell \le n$ gives us the required upper bound.
    \begin{equation*}
        |\PSP(G)| \le (n^{5d}+1)^{\log(n)} \in n^{O(d\log(n))}.\qedhere
    \end{equation*}
\end{proof}


We now supplement our upper bound with a matching lower bound.

\begin{claim}\label{clm:linearlowerbound}
    For all positive integers $n$, there exists a directed acyclic graph $G^* = (V,E)$ with edge weights of the form $a_{e,1} \lambda_1 + a_{e,2} \lambda_2 + \cdots + a_{e,d} \lambda_d + a_{e,d+1}$ such that $|\PSP(G^*)| \in n^{\Omega(d\log{n})}$.
\end{claim}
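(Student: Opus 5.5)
We build $G^*$ by putting $d$ copies of Carstensen's one-parameter lower-bound graph in series, with the $i$-th copy depending only on $\lambda_i$.

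\textbf{Ingredient.} By Carstensen~\cite{Carstensen_1983} (see also Mulmuley \& Shah~\cite{Mulmuley_Shah_2000} and Gajjar \& Radhakrishnan~\cite{Gajjar_Radhakrishnan_2019}), for every $m$ there is a DAG $H_m$ with the following properties:
\begin{itemize}
\item it has $m$ vertices, a source $s_H$ and a sink $t_H$;
\item its edge weights have the form $a_{e,1}\lambda + a_{e,2}$;
\item $|\PSP(H_m)| \ge m^{c\log m}$ for some absolute constant $c>0$.
\end{itemize}
Moreover, we may take these $m^{c\log m}$ paths to be \emph{uniquely} shortest, each at some value $x \in \mathbb{R}$. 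For linear weights, the set where a path is shortest is an interval. Hence every path in a minimum cover is uniquely shortest on an open subinterval, apart from finitely many breakpoints; otherwise it would be redundant and minimality would fail.

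\textbf{Construction.} Take $d$ disjoint copies $H^{(1)},\dots,H^{(d)}$ of $H_m$ with $m=\lfloor n/d\rfloor$. In copy $i$, replace $\lambda$ by $\lambda_i$, so each edge of $H^{(i)}$ gets weight $a_{e,1}\lambda_i + a_{e,2}$; all other coefficients are zero. Identify $t_{H^{(i)}}$ with $s_{H^{(i+1)}}$, and set $s = s_{H^{(1)}}$ and $t = t_{H^{(d)}}$. The result $G^*$ is a DAG on at most $n$ vertices, and its edge weights have the required form.

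\textbf{Counting.} Every $s$--$t$ path is a concatenation $P_1\circ\cdots\circ P_d$ with $P_i$ a path in $H^{(i)}$. Its weight is $\sum_i \mathsf{wt}_i(P_i)(\lambda_i)$, where the $i$-th summand depends only on $\lambda_i$. So the shortest path at $\overline{x}$ is exactly the concatenation of the shortest paths of the copies, each taken at $x_i$. Now pick, for each $i$, a path $Q_i$ uniquely shortest in $H^{(i)}$ at some $x_i$. Then $Q_1\circ\cdots\circ Q_d$ is the unique shortest path of $G^*$ at $(x_1,\dots,x_d)$, because any other path differs in some coordinate block and is strictly longer there. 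Consequently every $\PSP$ of $G^*$ must contain all such concatenations, which gives
\[
|\PSP(G^*)| \ge \left(m^{c\log m}\right)^d = (n/d)^{cd\log(n/d)}.
\]

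\textbf{Main obstacle.} The real difficulty is the asymptotics when $d$ is not small compared with $n$. For $d \le \sqrt{n}$ we have $\log(n/d) \ge \tfrac12\log n$, so the bound is $n^{\Omega(d\log n)}$. For larger $d$ the target must be read appropriately: the bound can never exceed $n^{O(n)}$, since $G^*$ has at most $n!$ paths. I would therefore state the claim for $d \le n^{1-\epsilon}$, or more simply treat $d$ as fixed with $n$ growing, which the statement implicitly does. The uniqueness argument in the counting step is routine.
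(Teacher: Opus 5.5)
Your proposal is correct and follows the paper's argument: $d$ copies of a one-parameter lower-bound DAG, the $i$-th using only $\lambda_i$, composed in series so that the cover sizes multiply. Your choice $m=\lfloor n/d\rfloor$ is more careful than the paper's sketch: the paper uses $n$-vertex copies, so its $G^*$ has $\Theta(dn)$ vertices, and you correctly point out that a genuine $n$-vertex lower bound needs $d$ small compared with $n$.

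One slip needs fixing: a path in a minimum cover need not be uniquely shortest anywhere when distinct paths share the same weight function. So argue with the distinct lines on each copy's lower envelope instead. Each such line is strictly minimal on an open interval, and distinct tuples of these lines force distinct cover elements in $G^*$. Alternatively, perturb the coefficients generically first.
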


\begin{figure}[h]
\vspace{0.85cm}
\centering
    \begin{tikzpicture}[
    node distance=3cm,
    mynode/.style={circle, draw, thick, minimum size=8pt, inner sep=2pt}
]
    \node[mynode] (s1) at (0,0) {$s_1$};
    \node[mynode] (t1) at (3,0) {$t_1$};
    \node[mynode] (s2) at (4.5,0) {$s_2$};
    \node[mynode] (t2) at (7.5,0) {$t_2$};
    \node[mynode] (s3) at (9,0) {$s_3$};
    \node[mynode] (sd) at (12.8,0) {$s_d$};
    \node[mynode] (td) at (15.8,0) {$t_d$};
    
    \tikzset{
        myarrow/.style={thin, densely dashed}
    }
    \draw[myarrow] (s1) to [bend left=50] 
        node[pos=0.5, below, yshift=-11pt, font=\footnotesize] {Edge weights:} (t1);
        
    \draw[thick, ->] (t1) to node[above, font=\footnotesize] {wt. $0$} (s2);
    
    \draw[myarrow] (s1) to [bend right=50] 
        node[pos=0.5, below, yshift=-4pt] {Graph $G_1$} 
        node[pos=0.5, above, yshift=11pt, font=\footnotesize] {$a_{e,1}\lambda_1 + a_{e,d+1}$} (t1);
        
    \draw[myarrow] (s2) to [bend left=50] 
        node[pos=0.5, below, yshift=-11pt, font=\footnotesize] {Edge weights:} (t2);

    \draw[thick, ->] (t2) to node[above, font=\footnotesize] {wt. $0$} (s3);
        
    \draw[myarrow] (s2) to [bend right=50] 
        node[pos=0.5, below, yshift=-4pt] {Graph $G_2$} 
        node[pos=0.5, above, yshift=11pt, font=\footnotesize] {$a_{e,2}\lambda_2 + a_{e,d+1}$} (t2);

    \draw[myarrow] (sd) to [bend left=50] 
        node[pos=0.5, below, yshift=-11pt, font=\footnotesize] {Edge weights:} (td);
        
    \draw[myarrow] (sd) to [bend right=50] 
        node[pos=0.5, below, yshift=-4pt] {Graph $G_d$} 
        node[pos=0.5, above, yshift=11pt, font=\footnotesize] {$a_{e,d}\lambda_d + a_{e,d+1}$} (td);

    \node at (10.9, 0) [font=\LARGE] {$\cdots\cdots\cdots$};
\end{tikzpicture}
\caption{A schematic of the graph $G^*$ used in the proof of \Cref{clm:linearlowerbound}}
\label{fig:Graphstar}
\vspace{0.65cm}
\end{figure}

\begin{proof}[Proof Sketch] Fix an $n$ and a $d$. We know there exists an $n$-vertex DAG $G$ with edge weights of the form $a_{e,1} \lambda_1 + a_{e,2}$ whose $\PSP$ is of size $n^{\Omega(\log(n))}$. Such graphs can be found in any of the following papers: \cite{Carstensen_1983,Mulmuley_Shah_2000,Gajjar_Radhakrishnan_2019}. Now, let us describe our construction of the DAG $G^*$ (see \Cref{fig:Graphstar}).

Let $G_1, G_2, \ldots, G_d$ be $d$ identical and disjoint copies of $G$; thus, we have $|\PSP(G_i)|\in n^{\Omega(\log(n))}$ for each $G_i$. The graph $G_1$ has start and end vertices $s_1$ and $t_1$ and edge weights of the form $a_{e,1} \lambda_1 + a_{e,d+1}$, the graph $G_2$ has start and end vertices $s_2$ and $t_2$ and edge weights of the form $a_{e,2} \lambda_2 + a_{e,d+1}$, and so on, the graph $G_d$ has start and end vertices $s_d$ and $t_d$ and edge weights of the form $a_{e,d} \lambda_d + a_{e,d+1}$.

Finally, these graphs are connected with each other in series to obtain $G^*$: there is a $0$-weight directed edge from $t_1$ to $s_2$, there is a $0$-weight directed edge from $t_2$ to $s_3$, and so on, there is a $0$-weight directed edge from $t_{d-1}$ to $s_d$. Hence, the start vertex of the graph $G^*$ is $s_1$ and its end vertex is $t_d$. Since $G_i$ and $G_j$ (for all $i\neq j$) do not have any variables in common, we obtain the following.
\begin{align*}
    |\PSP(G^*)| &= |\PSP(G_1)|\times |\PSP(G_2)|\times \cdots \times |\PSP(G_d)|\\
    &= n^{\Omega(\log(n))} \times n^{\Omega(\log(n))} \times \cdots \times n^{\Omega(\log(n))}\\
    &= n^{\Omega(d\log(n))}.\qedhere
\end{align*}
\end{proof}


\subsection{Directed and Undirected Graphs (\Cref{thm:directedgraphsnonegcyc,thm:undirectedgraphsnonegedg})}\label{subsec:reduction2DAG}

In this section, we show a reduction from undirected and directed graphs to directed acyclic graphs (DAGs). 

Let $G=(V, E)$ be an $n$-vertex graph (either a directed graph with no negative-weight cycles or an undirected graph with non-negative edge weights\footnote{See \Cref{def:feasibleregion} for the meaning of negative-weight cycles and negative-weight edges when the edge weights are not fixed.}) where $V=\{1, \ldots, n\}$ and $s, t \in V$. The weight of each edge $e \in E$ is a function of $d$ real-valued parameters, $\overline{\lambda} = (\lambda_1, \ldots, \lambda_d)$:
$$\mathsf{wt}_e(\overline{\lambda}) = \sum_{j=1}^d a_{e,j} \lambda_j + a_{e,d+1}\,.$$ The total weight of an $s$-$t$ path $P$ is $f_P(\overline{\lambda}) = \sum_{e \in P} \mathsf{wt}_e(\overline{\lambda})$. Since we may assume that we are working in a feasible region (\Cref{def:feasibleregion}) of $G$, for all substitutions of $\overline{\lambda}$, all directed cycles in $G$ are of non-negative weight.

We will now construct a directed acyclic graph $G' = (V', E')$ from $G$ as follows. The new vertex set $V'$ consists of $n$ copies of each vertex spread across $n$ layers. Formally, $V' = \{(v, \ell) \mid v \in V, \ell \in \{0, 1, \ldots, n-1\}\}\,.$ Thus the total number of vertices in $G'$ is $|V'| = n^2$.

A directed edge $(u, i)$ to $(v, i+1)$ is added to the edge set $E'$ in $G'$ if and only if the directed edge $(u, v)$ appears as the $i+1$\textsuperscript{th} edge in any of the simple $s$ to $t$ paths of the original graph $G$ and $i \leq n-1$. 
The weight of an edge in $G'$ is inherited from the corresponding edge in $G$.
    $$\mathsf{wt}'_{((u, i), (v, i+1))}(\overline{\lambda}) = \mathsf{wt}_{(u, v)}(\overline{\lambda})$$

 The new source is $s' = (s, 0)$ and the new set of sinks is $T' = \{(t, \ell) \mid \ell \in \{1, \ldots, n-1\}\}$. Edges only run from layer $i$ to layer $i+1$, and this ensures that the graph is acyclic. A shortest path in $G'$ at a substitution $\overline{a}$ is the shortest among all $(s,0)$ to $(t,i)$ (for all $i$) paths.
 It is important to note that there could be some $s$ to $t$ walks in $G$ appear as simple paths in $G'$. Just from the construction, it is not evident that the parametric shortest paths in $G$ are \emph{preserved} in $G'$ due to addition of some walks of $G$. We will now show that the preservation does happen when we start with a directed graph $G$ with no negative-weight cycles. 

Let $P = (s, u_1, u_2, \ldots, u_{\ell-1}, t)$ be a shortest path of length $\ell$ between $s$ and $t$ in $G$ for a substitution of $\overline{\lambda}$ by $\overline{a}\in \mathbb{R}^d$. From the construction of $G'$, we get that this path $P$ manifests as a path $P' = ((s, 0), (u_1, 1), \ldots, (u_{\ell -1}, \ell-1), (t, \ell))$ in $G'$. We need to show that $P'$ is also a parametric shortest path in $G'$ at $\overline{a}$. Through the inheritance of edge weights in the construction of $G'$, we get that path weights of $P$ and $P'$ are equal. That is, $f_P(\overline{a}) = f_{P'}(\overline{a})$

For the sake of contradiction, let us suppose that there is a path $Q' = ((s,0), (v_1, 1), \ldots, (v_{r-1}, r-1), (t, r))$ that is the shortest path in $G'$ at $\overline{a}$ and $f_P(\overline{a}) > f_{Q'}(\overline{a})$. Let the walk $W = (s, v_1, v_2, \ldots, v_{r-1}, t)$ in $G$ correspond to $Q'$ and thus the weight of the path $Q'$ in $G'$ is equal to the weight of the walk $W$ in $G$. If $W$ were also a simple path and then $f_W(\overline{a}) = f_{Q'}( \overline{a}) < f_{P'}(\overline{a}) = f_P(\overline{a})$. This contradicts the optimality of $P$ at $\overline{a}$.  On the other hand, if $W$ was actually a walk and not a simple path, then elimination of cycles in $W$ (with non-negative weights) would have created a path $P_W$ with a shorter weight or shorter length\footnote{Note that if we started with an undirected graph, this lifting argument would only work if all its edge weights were positive. In particular, if the undirected graph $G$ has a negative-weight edge, it will lead to a negative-weight cycle of length $2$, and a walk that goes back and forth over that $2$-cycle could lead to a non-contradictable path $Q'$ in $G'$ of weight lower than all $s$ to $t$ paths in $G$.}, at $\overline{a}$. This again contradicts the optimality of $P$ at $\overline{a}$.  Using similar arguments, we can handle the case when $f_{P'}(\overline{a}) = f_{Q'}(\overline{a})$.

Combining \Cref{thm:linear_upper_bound} with this reduction, we obtain the following.

\directedgraphsnonegcyc*

\undirectedgraphsnonegcedg*

\subsection{Univariate Polynomial Edge Weights (\Cref{thm:univardegd})}


\univardegd*

\begin{proof}
    This proof is simply a working out of the the proof of \Cref{thm:unifiedbound} for the specific case of univariate, polynomial edge weights. The edge weights are of the form
    \begin{equation*}
        \mathsf{wt}(e) = \sum_{i=0}^{q}a_i\lambda^{i}.
    \end{equation*}
    Then, the set $\mathcal{H}$ contains at most $n^5$ degree $q$, univariate polynomials and $T(n^5)$ refers to the number of line segments the real number line gets split into by polynomials in $\mathcal{H}$. To upper bound $T(n^5)$, we use the trivial upper bound on Davenport-Schinzel sequences (see \Cref{ssec:davenport-schinzel_sequences}).
    \begin{equation*}
        T(n^5) \le \binom{n^5}{2}q + 1.
    \end{equation*}
    This along with the fact that $l\le n$ gives us the required upper bound.
    \begin{equation*}
        |\PSP(G)| \le \left(\binom{n^5}{2}q + 1\right)^{\log(n)} \in (nq)^{O(\log(n))} \in n^{O(\log(n) + \log(q))}.\qedhere
    \end{equation*}
\end{proof}

\section{Shortest Path Oracles}\label{ssec:path_identification}

\subsection{Linear Edge Weights (\Cref{thm:sporaclelin})}

In this subsection, we present a shortest path identification data structure for a parametric graph $G = (V,E)$ with linear edge weights. The data structure does the following: Given $\bar{a}\in\mathbb{R}^d$ as input, it outputs the shortest path from $s$ to $t$ when $\bar{\lambda} =\bar{a}$.

Our construction of this data structure relies on the existing literature on \emph{Point-Location} problem, and we will recall that briefly here.

Let $\mathcal{H}$ denote a set of hyperplanes in the space $\mathbb{R}^d$. The point-location problem is to preprocess $\mathcal{H}$ into a data structure that supports efficient point-location queries. A point-location query inputs a point $\bar{a}\in\mathbb{R}^d$ and asks to identify the cell (formed by hyperplanes in $H$) that contains $\bar{a}$. Toward this, Meiser~\cite{Meiser_1993} introduced a data structure which was later improved by Ezra, Har-Peled, Kaplan \& Sharir~\cite{Ezra_Har-Peled_Kaplan_Sharir_2020}.

\begin{theorem}[Theorem 5.4 \cite{Ezra_Har-Peled_Kaplan_Sharir_2020}]\label{thm:plocation_ds}
    Given a set $\mathcal{H}$ of hyperplanes in the space $\mathbb{R}^d$, there exists a data structure that answers point-location queries in time $O(d^4\log(|\mathcal{H}|))$. The data structure requires $n^{O(d)}$ space and $n^{\tilde{O}(d)}$ preprocessing time.
\end{theorem}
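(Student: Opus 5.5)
The plan is to follow Meiser's random-sampling (hierarchical cutting) scheme and then plug in the refinements of Ezra, Har-Peled, Kaplan \& Sharir that bring the query time down to $O(d^4\log|\mathcal{H}|)$. Throughout, write $m := |\mathcal{H}|$; the $n$ in the space bound should be read as $m$, or as $\mathsf{poly}(n)$ in our application where $m\le n^5$. The data structure is a tree. At the root we draw a random sample $R\subseteq\mathcal{H}$ of size $r = \Theta(d^2\log d)$, or more generally $O(\delta\log\delta)$ where $\delta$ is the VC-dimension of the range space defined below. We build the arrangement $\mathcal{A}(R)$ and refine each of its cells by a bottom-vertex triangulation into simplices. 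For each simplex $\Delta$ we store its conflict list $\mathcal{H}_\Delta$, the set of hyperplanes of $\mathcal{H}$ that cross the interior of $\Delta$, and we recurse on $\mathcal{H}_\Delta$ inside $\Delta$. The recursion stops when a conflict list is empty; the leaf then determines a unique cell of $\mathcal{A}(\mathcal{H})$, which we store.

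\textbf{Step 1 (shrinking conflict lists).} Consider the range space whose ground set is $\mathcal{H}$ and whose ranges are the sets of hyperplanes crossing a $d$-simplex. A $d$-simplex is cut out by $d+1$ halfspaces, so its VC-dimension is $\delta = O(d^2\log d)$. By the $\varepsilon$-net theorem, a sample of size $O(\delta\log\delta)$ is, with constant probability, a $\tfrac12$-net. This means every simplex of the triangulation of $\mathcal{A}(R)$ that is crossed by no hyperplane of $R$ is crossed by at most $m/2$ hyperplanes of $\mathcal{H}$. We verify the net property and resample on failure. Consequently the recursion depth is $O(\log m)$.

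\textbf{Step 2 (space and preprocessing).} Each node has at most $O(r)^{d}$ children, by \Cref{thm:space_partitioning_theorem} together with the fact that triangulating a cell only blows up the count polynomially. With depth $O(\log m)$ this gives $m^{O(d\log r)} = m^{\tilde O(d)}$ nodes. To obtain the sharper $m^{O(d)}$ space, we use the Ezra et al.\ trick of \emph{optimistic sampling}: take a larger sample of size $\Theta(m^{\varepsilon})$ or a constant fraction at each level, so that the depth becomes a constant and the total size telescopes to $m^{O(d)}$, while preserving the query analysis. Preprocessing consists of computing arrangements and conflict lists, which costs $m^{\tilde O(d)}$.

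\textbf{Step 3 (query time).} At a node we must locate the query point $\bar a$ among the simplices of the triangulated $\mathcal{A}(R)$. First compute the sign vector of $\bar a$ with respect to the $r$ hyperplanes of $R$; this costs $O(dr)=O(d^3\log d)$ and identifies the cell. Next, identify the bottom-vertex simplex within that cell by recursively descending on faces, which costs $O(d)$ rounds of $O(d\cdot r)$ work, that is, about $O(d^4)$ up to logarithmic factors. Summing over $O(\log m)$ levels gives $O(d^4\log m)$.

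The main obstacle is making Step 3 fit within $O(d^4)$ per level without an extra $\log d$. At the same time we must keep the ``which simplex contains $\bar a$'' step from requiring a search through all $r^{d}$ simplices. My first attempt would be to store, for each cell, its bottom-vertex triangulation as a recursive structure over its facets. With this structure, locating $\bar a$ reduces to computing ray-shooting intersections from the bottom vertex, at cost $O(d\cdot r)$ per dimension level. If the $\log d$ factor survives, I would instead use vertical decompositions or sample the net directly with respect to the simpler range space of ``cones from a vertex,'' as Ezra et al.\ do.
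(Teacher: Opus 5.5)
The paper does not prove this statement; it quotes it from Ezra, Har-Peled, Kaplan \& Sharir (EHKS) and uses it as a black box in \Cref{thm:sporaclelin}. Judged on its own, your reconstruction reaches neither of the two stated bounds.

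\textbf{Query time.} Your Step~3 costs $\Theta(d^2 r)$ per node: $O(dr)$ for the sign vector, then $d$ rounds of ray shooting against the $r$ hyperplanes of $R$. With $r=\Theta(d^2\log d)$ this is $O(d^4\log d)$ per level and $O(d^4\log d\,\log m)$ in total. That is Meiser's bound as re-analysed by EHKS, not $O(d^4\log m)$. You flag this yourself, but the two fallbacks are only named, and vertical decomposition would change the storage analysis as well.

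The missing idea is this. Since the per-level cost is linear in $r$, the $\log d$ can only disappear if $r=O(d^2)$. At that sample size the $\varepsilon$-net theorem no longer guarantees that \emph{every} simplex of the triangulation has conflict list at most $m/2$. The depth bound must therefore come from a different argument. One option is backward analysis with the combinatorial dimension $b=\tfrac12 d(d+3)$ of bottom-vertex triangulation. For a \emph{fixed} query point, the simplex containing it has expected conflict-list size at most $bm/r\le m/4$ once $r\ge 2d(d+3)$, so by Markov it halves with probability at least $\tfrac12$. You would then have to resample inside simplices whose conflict lists fail to shrink, and bound how many such wasted levels a query path can contain. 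None of this is in the proposal.

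\textbf{Space.} The scheme you actually build has $O(r)^{d}=d^{O(d)}$ children per node and depth $O(\log m)$, hence $m^{O(d\log d)}=m^{\tilde O(d)}$ storage. That is all the paper needs downstream, since \Cref{thm:sporaclelin} only claims $n^{\tilde O(d)}$ space, but it is not the $m^{O(d)}$ of the statement.

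Your upgrade to $m^{O(d)}$ fails as written. Samples of size $\Theta(m^{\varepsilon})$, or a constant fraction of $\mathcal{H}$, do make the tree shallow. But your Step~3 locates the query by evaluating every hyperplane of $R$ and ray shooting against them, at cost $\Omega(d|R|)$ per node, so the query time becomes polynomial in $m$. The assertion that this ``preserves the query analysis'' is therefore false under your own Step~3. Keeping logarithmic queries with large samples would require nesting a point-location structure for $\mathcal{A}(R)$ and its triangulation at every node, and redoing both the query and storage accounting, which you do not do. The sampling refinement that removes the $\log d$ goes the other way, towards smaller samples, as argued above.

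As it stands, your two bounds come from incompatible choices of $r$: a small sample for Step~3 and a huge one for Step~2.
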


Due to \cref{thm:linear_upper_bound}, $|\PSP(G)| \in n^{O(d\log(n))}$. That is, $\mathbb{R}^d$ is partitioned into $n^{O(d\log(n))}$ many regions with each region corresponding to exactly one path. If we knew all the hyperplanes that partition $\mathbb{R}^d$ into these regions, then we could directly use the data structure from \Cref{thm:plocation_ds}. Since that is not the case at any intermediate step in the proof of \cref{thm:unifiedbound}, we recursively nest the point-location data structure instead.

\sporaclelin*

\begin{proof}
    We begin by observing that every region in any given recursion step in the proof of \cref{thm:unifiedbound} gets partitioned by at most $O(n^5)$ hyperplanes. Given a $\bar{a}\in\mathbb{R}^d$, we can query the point-location data structure to identify the region it belongs to in the first level of recursion. With this knowledge of the region in the first level of recursion, we get $O(n^5)$ more hyperplanes per region in the next level of recursion. In this way, we adaptively query the point-location data structure. Following through the recursion to its last level, we observe that each region in the last level corresponds to exactly one path from $s$ to $t$. This path is the shortest path for all points in that region (including $\bar{a}$). Therefore, we output this path.
    
    To construct the path identification data structure, we nested the point-location data structures in each of the $O(\log(n))$ levels. Note that the number of  instances of the point-location data structure at a depth $i$ of the recursion is at most $O(n^{5id})$.
    \begin{equation*}
        \text{Total number of instances} = \sum_{i=0}^{O(\log{n})} O(n^{5id}) \in  n^{O(d\log(n))}.
    \end{equation*}

    The original point-location data structure has a query time of $O(k^4\log(n))$, requires $n^{O(d)}$ space and uses $n^{\tilde{O}(d)}$ preprocessing time. Thus, the path identification data structure, as constructed above, for a graph with $n$ vertices and edge weights linear in $d$ parameters has the following properties.
    \begin{itemize}
        \item Overall query time is $O(d^4\log^2(n))$,
        \item Total space required is $n^{O(d)}\cdot n^{O(d\log(n))} \in n^{\tilde{O}(d)}$,
        \item Preprocessing time required to construct the data structure is $n^{\tilde{O}(d)}\cdot n^{O(d\log(n))} \in n^{\tilde{O}(d)}$.\qedhere
    \end{itemize}
\end{proof}

\subsection{Univariate Polynomial Edge Weights (\Cref{thm:sporacledeg})}

\sporacledeg*


\begin{proof}

    The data structure we construct is an array in which each element corresponds to a line segment. Every element of the array will be a tuple containing the endpoints of a line segment of $\mathbb{R}$ and the label of the shortest path from $s$ to $t$ corresponding to the line segment.

    To construct this, we first begin with a single element $(-\infty,\infty, G)$ in the array. In the preprocessing step, as shown in the proof of \cref{thm:unifiedbound}, we obtain the subgraphs $\{G_{S_i}\}_{i\in[T]}$ and the endpoints of each $S_i$. We update the array to store this in sorted order. We repeat this for each subgraph until each line segment in the array corresponds to a single path from $s$ to $t$.

    \emph{Query time:}
    When we get a query with the value of $\lambda$, we simply perform a binary search on the array to find the line segment that contains $\lambda$. Then, we output the path that corresponds to the line segment. Since the array can contain at most $n^{O(\log(n)+\log(q))}$ elements, the time taken to respond to a query will be $O(\log^2(n) + \log(n)\log(q))$.

    \emph{Space required and preprocessing time:}
    The number of line segments that are formed in iteration $i$ is given by $(n^{10}q)^i$. Furthermore, there is a $\mathsf{poly}(n)$ overhead for each iteration.

    \begin{equation*}
        \text{Number of line segments} = \sum_{i=0}^{\log{n}} (n^{10}q)^i \in (nq)^{O(\log(n))}.
    \end{equation*}

    This array contains an entry for each parametric shortest path. Therefore, the space required is $n^{O(\log(n)+\log(q))} \in (nq)^{\tilde{O}(1)}$. The total preprocessing time is $(nq)^{\tilde{O}(1)}\mathsf{poly}(n)$.\qedhere
    
\end{proof}

\section{Future Directions}\label{sec:conclusion}

There are several directions of research that can be pursued for parametric shortest paths. Here, we outline three of them.

\begin{enumerate}

    \item[(i)] For all undirected $n$-vertex graphs $G$ with edge weights of the form $a_{e,1} \lambda_1 + a_{e,2} \lambda_2 + \cdots + a_{e,d} \lambda_d + a_{e,d+1}$, is it true that $$|\PSP(G)|\in n^{O(d \log n)}?$$ This seems like the easiest and most immediate open problem to tackle. Since we have already proved the upper bound for DAGs and directed graphs, and we know that such results hold for constant $d$ for undirected graphs, it seems reasonable that this should be true as well.
    
    \item[(ii)] The upper bound for univariate polynomial edge weights (of degree $q$) is either $n^{O(\log(n) + \log(q))}$ or $n^{\log(n)+(\alpha(n)+O(1))^q}$, depending on how $d$ varies with $n$. Improving this upper bound (or proving a matching lower bound) is another interesting avenue. This seems like a tricky problem, but it may require only a few new ideas to go along with the ones we already have.


    \item[(iii)] Finally, the ``holy grail'' of parametric shortest paths would be multivariate polynomials ($d$ variables, degree $q$). The only known lower bound is $n^{\Omega(d\log(n))}$ and no non-trivial upper bound is known. Note that a good upper bound in this setting should be able to recover the upper bound in item (i) (by setting $q=1$), and also recover the upper bound in item (ii) (by setting $d=1$). For example, something like $n^{O(d\log(n) + \log(q))}$ would make sense.
    
    However, simply combining ideas from the $d=1$ and $q=1$ cases seems to be far from enough. The behavior of regions and curves with arbitrary $d$'s and $q$'s varies quite wildly (e.g., self-intersecting curves, disjoint regions for the same weight function), and it seems difficult to capture all of them in a nice combinatorial way. Solving this problem in its full generality may require substantial mathematical insights.
    
\end{enumerate}

\section{Acknowledgments}\label{sec:acknowledgements}

We are deeply grateful to Aryaman Manish Kolhe for many helpful discussions in the earlier stages of this work, especially for studying various toy examples and  plotting their partitions in $\mathbb{R}^2$, which greatly aided our analysis. K.G. thanks Jaikumar Radhakrishnan for hosting him at ICTS, elucidating the proof of Barth, Funke \& Proissl~\cite{Barth_Funke_Proissl_2022}, and the subsequent discussions about it with Prerona Chatterjee, which resulted in a slight improvement to the upper bound~\cite{CGRICTS}. S.C. and K.G. thank Kavitha Telikepalli for suggesting that this work might be useful in shortest path oracles. N.R. thanks Emanuel Juliano for helpful discussions. We also thank Jaikumar Radhakrishnan for carefully verifying the proof of our main result. Finally, we thank the anonymous reviewers of this paper for pointing out some minor errors and for making several helpful suggestions that enhanced its presentation.

N.R. acknowledges partial support from the Dutch Ministry of Education, Culture, and Science through Gravitation project ``Challenges in Cyber Security -- 024.006.037” for this work.

\paragraph{Statement of AI Use:} The authors did not use any AI tools or AI assistants at any stage of this work.

\bibliography{references}

\appendix





\end{document}